\def\dOi{11(4:7)2015}
\newcommand{\inv}[1]{\mspace{-1mu}\scalebox{0.5}[1.0]{$-\!$}#1}
\newcommand{\BUR}{\,\mspace{1.7mu}\raisebox{-1.6pt}{\includegraphics{figures/symbols.0}}\mspace{1.7mu}}
\newcommand{\BUL}{\,\mspace{1.7mu}\raisebox{-1.6pt}{\includegraphics{figures/symbols.1}}\mspace{1.7mu}}
\newcommand{\BLR}{\,\mspace{1.7mu}\raisebox{-1.6pt}{\includegraphics{figures/symbols.2}}\mspace{1.7mu}}
\newcommand{\BLL}{\,\mspace{1.7mu}\raisebox{-1.6pt}{\includegraphics{figures/symbols.3}}\mspace{1.7mu}}
\newcommand{\BULp}{\,\mspace{1.7mu}\raisebox{-1.6pt}{\includegraphics{figures/symbols.5}}\mspace{1.7mu}}
\newcommand{\BLRp}{\,\mspace{1.7mu}\raisebox{-1.6pt}{\includegraphics{figures/symbols.6}}\mspace{1.7mu}}
\newcommand{\BT}{\,\mspace{1.7mu}\raisebox{-1.6pt}{\includegraphics{figures/symbols.8}}\mspace{1.7mu}}
\newcommand{\BB}{\,\mspace{1.7mu}\raisebox{-1.6pt}{\includegraphics{figures/symbols.9}}\mspace{1.7mu}}
\newcommand{\BR}{\,\mspace{1.7mu}\raisebox{-1.6pt}{\includegraphics{figures/symbols.10}}\mspace{1.7mu}}
\newcommand{\BL}{\,\mspace{1.7mu}\raisebox{-1.6pt}{\includegraphics{figures/symbols.11}}\mspace{1.7mu}}
\newcommand{\B}{\,\mspace{1.7mu}\raisebox{-1.6pt}{\includegraphics{figures/symbols.12}}\mspace{1.7mu}}
\newcommand{\BTstrict}{\,\mspace{1.7mu}\raisebox{-1.6pt}{\includegraphics{figures/symbols.14}}\mspace{1.7mu}}
\newcommand{\BBstrict}{\,\mspace{1.7mu}\raisebox{-1.6pt}{\includegraphics{figures/symbols.15}}\mspace{1.7mu}}
\newcommand{\BRstrict}{\,\mspace{1.7mu}\raisebox{-1.6pt}{\includegraphics{figures/symbols.16}}\mspace{1.7mu}}
\newcommand{\DUR}{\,\raisebox{-2.7pt}{\includegraphics{figures/symbols.18}}}
\newcommand{\DUL}{\,\raisebox{-2.7pt}{\includegraphics{figures/symbols.19}}}
\newcommand{\DLR}{\,\raisebox{-2.7pt}{\includegraphics{figures/symbols.20}}}
\newcommand{\DLL}{\,\raisebox{-2.7pt}{\includegraphics{figures/symbols.21}}}
\newcommand{\DURp}{\,\raisebox{-2.7pt}{\includegraphics{figures/symbols.22}}}
\newcommand{\DULp}{\,\raisebox{-2.7pt}{\includegraphics{figures/symbols.23}}}
\newcommand{\DLRp}{\,\raisebox{-2.7pt}{\includegraphics{figures/symbols.24}}}
\newcommand{\DLLp}{\,\raisebox{-2.7pt}{\includegraphics{figures/symbols.25}}}
\newcommand{\DR}{\,\raisebox{-2.7pt}{\includegraphics{figures/symbols.28}}}
\newcommand{\D}{\,\raisebox{-2.7pt}{\includegraphics{figures/symbols.30}}}
\newcommand{\Darg}[1]{\,\mathbin{\ooalign{\raisebox{-2.7pt}{\includegraphics{figures/symbols.31}}\crcr\hidewidth\scriptsize$\mspace{-5mu}#1$\hidewidth}}}
\newcommand{\Dargsmall}[1]{\,\mathbin{\ooalign{\raisebox{-2.5pt}{\includegraphics[scale=0.875]{figures/symbols.31}}\crcr\hidewidth\tiny$\mspace{-5mu}#1$\hidewidth}}}
\newcommand{\DTstrict}{\,\raisebox{-2.7pt}{\includegraphics{figures/symbols.32}}}
\newcommand{\DBstrict}{\,\raisebox{-2.7pt}{\includegraphics{figures/symbols.33}}}
\newcommand{\DRstrict}{\,\raisebox{-2.7pt}{\includegraphics{figures/symbols.34}}}
\newcommand{\DLstrict}{\,\raisebox{-2.7pt}{\includegraphics{figures/symbols.35}}}
\newcommand{\CUR}{\,\raisebox{-2.2pt}{\includegraphics{figures/symbols.36}}\,}
\newcommand{\CUL}{\,\raisebox{-2.2pt}{\includegraphics{figures/symbols.37}}\,}
\newcommand{\CLR}{\,\raisebox{-2.2pt}{\includegraphics{figures/symbols.38}}\,}
\newcommand{\CLL}{\,\raisebox{-2.2pt}{\includegraphics{figures/symbols.39}}\,}
\newcommand{\CURp}{\,\raisebox{-2.2pt}{\includegraphics{figures/symbols.40}}\,}
\newcommand{\CULp}{\,\raisebox{-2.2pt}{\includegraphics{figures/symbols.41}}\,}
\newcommand{\CLRp}{\,\raisebox{-2.2pt}{\includegraphics{figures/symbols.42}}\,}
\newcommand{\CLLp}{\,\raisebox{-2.2pt}{\includegraphics{figures/symbols.43}}\,}
\newcommand{\Carg}[1]{\,\mathbin{\ooalign{\raisebox{-2.2pt}{\includegraphics{figures/symbols.49}}\crcr\hidewidth\scriptsize$\mspace{-3mu}#1$\hidewidth}}\,}
\newcommand{\Cargsmall}[1]{\,\mathbin{\ooalign{\raisebox{-2pt}{\includegraphics[scale=0.875]{figures/symbols.49}}\crcr\hidewidth\tiny$\mspace{-3mu}#1$\hidewidth}}\,}
\newcommand{\CTstrict}{\,\raisebox{-2.2pt}{\includegraphics{figures/symbols.50}}\,}
\newcommand{\CBstrict}{\,\raisebox{-2.2pt}{\includegraphics{figures/symbols.51}}\,}
\newcommand{\CRstrict}{\,\raisebox{-2.2pt}{\includegraphics{figures/symbols.52}}\,}
\newcommand{\CLstrict}{\,\raisebox{-2.2pt}{\includegraphics{figures/symbols.53}}\,}
\newcommand{\fa}[1]{\forall{\;#1}.\;}
\newcommand{\ex}[1]{\exists{\;#1}.\;}
\newcommand{\AG}{\,\mathbf{AG}\,}
\newcommand{\AF}{\,\mathbf{AF}\,}
\newcommand{\AX}{\,\mathbf{AX}\,}
\newcommand{\EX}{\,\mathbf{EX}\,}
\newcommand{\settc}[2]{{\{ #1 \,:\, #2 \}}}
\newcommand{\bigsettc}[2]{{\bigl\{ #1 \,:\,#2 \bigr\}}}
\newcommand{\ang}[1]{{\langle #1 \rangle}}
\newcommand{\bigang}[1]{{\bigl\langle #1 \bigr\rangle}}
\newcommand{\len}[1]{{\lvert #1 \rvert}}
\newcommand{\prj}[2]{\mspace{2mu}\downarrow_{#1}\mspace{-5mu}{\ifx\\#2\\\else\!\!\;(#2)\fi}}
\newcommand{\closure}{\mathsf{closure}}
\newcommand{\type}{\mathsf{type}}
\newcommand{\req}{\mathsf{-req}}
\renewcommand{\obs}{\mathsf{-obs}}
\newcommand{\tree}{\mathsf{tree}}
\renewcommand{\path}{\mathsf{path}}
\newcommand{\atom}{\mathsf{atom}}
\newcommand{\cluster}{\mathsf{cluster}}
\newcommand{\positive}{\mathsf{pos}}
\newcommand{\negative}{\mathsf{neg}}
\newcommand{\singleton}{\mathsf{sing}}
\newcommand{\equalin}{\stackrel{\in}{=}}
\newcommand{\refmatch}[1]{\hyperref[#1]{M\ref*{#1}}}
\newcommand{\reffulfill}[1]{\hyperref[#1]{F\ref*{#1}}}
\newcommand{\refglobal}[1]{\hyperref[#1]{G\ref*{#1}}}
\renewcommand\bibsection{
  \section*{\refname\@mkboth{\MakeUppercase{\refname}}{\MakeUppercase{\refname}}}
}
\begin{document}

\title[A decidable weakening of Compass Logic]      {\texorpdfstring{A decidable weakening of Compass Logic \\based on cone-shaped cardinal directions}      {A decidable weakening of Compass Logic based on cone-shaped cardinal directions\rsuper*}}

\author[A.~Montanari]{Angelo Montanari\rsuper a}
\address{{\lsuper a}University of Udine, Udine, Italy}
\email{angelo.montanari@uniud.it}

\author[Puppis]{Gabriele Puppis\rsuper b}
\address{{\lsuper b}LaBRI / CNRS, Bordeaux, France}
\email{gabriele.puppis@labri.fr}

\author[Sala]{Pietro Sala\rsuper c}
\address{{\lsuper c}University of Verona, Verona, Italy}
\email{pietro.sala@univr.it}

\keywords{Compass Logic, Cone Logic, Interval Temporal Logics}
\titlecomment{{\lsuper *}A short preliminary version of this work appeared in~\cite{cone_logic}.}

\begin{abstract}
We introduce a modal logic, called Cone Logic, whose formulas describe properties of points in the plane and spatial relationships between them. Points are labelled by proposition letters and spatial relations are induced by the four cone-shaped cardinal directions. Cone Logic can be seen as a weakening of Venema's Compass Logic. We prove that, unlike Compass Logic and other projection-based spatial logics, its satisfiability problem is decidable (precisely, \pspace-complete). 
We also show that it is expressive enough to capture meaningful interval temporal logics -- in particular, the interval temporal logic of Allen's relations `Begins', `During', and `Later', and their transposes. 
 \end{abstract}

\maketitle

\section{Introduction}\label{sec:introduction}

Spatial reasoning has both a strong theoretical relevance and many applications in various areas of computer 
science, including robotics, natural language processing, and geographical information systems 
\cite{spatial_logic_handbook,spatial_and_temporal_reasoning,many_dimensional_modal_logics}. 
However, despite the widespread interest in the topic, few 
techniques have been developed to automatically (and efficiently) reason about 
spatial relations over infinite structures. As a matter of fact, spatial reasoning 
has been mainly investigated in quite restricted algebraic settings.

Most logical formalisms for spatial reasoning can be conveniently classified into two classes, 
on the basis of the type of relations they make use of.
On the one side, there are logics whose modalities are based on cardinal directions.
The most notable example of a formalism in this class is Venema's Compass Logic 
\cite{compass_logic}, which allows one to express properties such as: 
``\emph{from every point labelled with $a$ there is a point to the north of it, that is, above 
it and vertically aligned to it, that is labelled with $b$}''. 
On the other side, there are formalisms based on topological relations, 
like the Region Connection Calculus \cite{region_spatial_logic,point-set-topological-spational-relations}, 
which can express properties such as:
``\emph{two regions of points, labelled with $a$ and $b$ respectively, are externally connected, that is, tangent}''.
A quite extensive discussion of the expressiveness of various spatial logics and of their 
connections can be found in \cite{modal_logics_topology}.

In this paper, we introduce a novel spatial modal logic, called \emph{Cone Logic}, 
which allows one to reason about directional relations between points in the rational plane. 
Being based on cardinal directions, our logic falls inside the first group of formalisms
discussed above. However, unlike most logics based on cardinal directions, 
the modal operators of Cone Logic range over cone-shaped regions of the plane -- formally, 
over quadrants -- rather than semi-axes.
To stress this difference, we will often talk of \emph{cone-shaped cardinal directions}, 
as opposed to \emph{projection-based cardinal directions} (see Figure \ref{fig:spatial-relations}).
This difference is also reflected in considerably better algorithmic properties.
While the satisfiability problem for modal logics with projection-based cardinal directions
-- notably, Compass Logic -- 
turns out to be highly undecidable \cite{undecidability_compass_logic,spatial_pnl},
we prove that Cone Logic enjoys a decidable satisfiability problem (in fact, \pspace-complete)
by making use of a suitable filtration technique. 
We also show that Cone Logic subsumes interesting interval temporal logics such as 
the temporal logic of sub-intervals/super-intervals, thus generalizing previous results 
in the literature \cite{subinterval_tableau_journal} and basically disproving 
a conjecture by Lodaya \cite{undecidability_BE}.

\medskip\noindent
{\bf Related work.}
The paper that is most related in spirit to the present work is that of Venema \cite{compass_logic},
who studies \emph{Compass Logic}.
Compass Logic is a two-dimensional modal logic interpreted over the Cartesian product of 
two linear orders, which features two pairs of modalities, each pair ranging over one of the two orders. 
The first undecidability result for the satisfiability problem of Compass Logic was shown in \cite{undecidability_compass_logic}
and it covers both the case where the logic is interpreted over the discrete infinite grid $\bbN\times\bbN$ 
and the case where the logic is interpreted over the Euclidean space $\bbR\times\bbR$.
In \cite{products_of_linear_modal_logics}, similar formalisms based on products of two linear modal logics
have been studied and the above-mentioned undecidability results have been strengthened to cover practically all 
classes of products of infinite/unbounded linear orders.
These negative results stem from the possibility of encoding halting computations of Turing machines
inside a two-dimensional structure and expressing the correctness of the encoding in the logic.

Cone Logic can be viewed as the fragment of Venema's Compass Logic
obtained from the full logic by enforcing the following restriction:
quantifications along one axis can be used only after a similar quantification along the 
other axis. 
Such a constrain makes it impossible to correctly encode computations of Turing machines in the 
underlying two-dimensional space, thus leaving room to recover the decidability of the satisfiability 
problem.

There is also a tight connection between modal logics over two-dimensional spaces 
and fragments of Halpern and Shoham's modal logic of time intervals (HS) \cite{interval_modal_logic}.
According to such a correspondence, intervals over 
a linearly ordered temporal domain are interpreted as points over a two-dimensional space. 
In Section \ref{sec:applications}, we will show how such a correspondence can be lifted 
to the logical level, by reducing the satisfiability problem for an expressive fragment 
of HS to the satisfiability problem for (a subset of formulas of) Cone Logic.

Other multi-dimensional spatial logics are studied in 
\cite{two_dimensional_logic,qualitative_spatial_modal_logics,multi_dimentional_logics}
(with different goals in mind). 
Some of them retain good decidability properties, but their expressive power 
is often limited. An example is
the logic proposed by Bennett in \cite{qualitative_spatial_modal_logics},
which uses a single modal operator interpreted as the interior in a given topology.
This logic is essentially equivalent to S4 and its satisfiability problem is \pspace-complete.

\medskip\noindent
{\bf Structure of the paper.}
In Section \ref{sec:logic}, we define 
syntax and 
semantics of Cone Logic and we discuss its expressiveness and satisfiability problem. 
In Section  \ref{sec:equivalences}, we introduce the basic machinery for attacking the 
satisfiability problem. In Section \ref{sec:dectree}, we show how to turn
a labelled region of the rational plane into an infinite (decomposition) tree 
structure. Then, in Section \ref{sec:pseudomodel}, we prove a tree
(pseudo-)model property 
for Cone Logic, that is, we describe models of
satisfiable Cone Logic formulas by means of suitable labelled tree structures. 
In Section \ref{sec:solution}, we exploit such a tree model property to
reduce the satisfiability problem for Cone Logic to the satisfiability
problem for a simple fragment of CTL. In Section \ref{sec:applications},
we make use of such a decidability result to prove that a meaningful
fragment of Halpern and Shoham's interval temporal logic HS,
interpreted over dense linear structures, is decidable in polynomial space.
In Section \ref{sec:conclusions}, we make some final remarks and we 
discuss related and open problems.

 \section{The logic}\label{sec:logic}

In this paper, we generically denote by $\bbP$ either the \emph{rational plane} $\bbQ\times\bbQ$ or the \emph{real}
(Euclidean) \emph{plane} $\bbR\times\bbR$. We will define the semantics of formulas of Cone Logic in the same way over labellings of the rational plane and labellings of the real plane. 

We call \emph{spatial relation} any binary relation $\Carg{d}\in\bbP\times\bbP$ between points in the plane. We 
use the infix notation $p \Carg{d} q$ for saying that two points $p,q\in\bbP$ satisfy a given spatial relation $\Carg{d}$. 
We start by defining some \emph{basic spatial relations}, denoted $\CTstrict$, $\CBstrict$, $\CRstrict$, $\CLstrict$,
that correspond to the four projection-based cardinal directions `North', `South', `East' and `West' (see  
Figure \ref{fig:spatial-relations} - left):
$$
\begin{array}{rclrcl}
  (x,y) ~\CTstrict~ (x',y') &~\text{iff}& ~ x=x' \et y<y'   & \quad\quad
  (x,y) ~\CBstrict~ (x',y') &~\text{iff}& ~ x=x' \et y>y'   \\[2ex]
  (x,y) ~\CRstrict~ (x',y') &~\text{iff}& ~ x<x' \et y=y'   & \quad\quad
  (x,y) ~\CLstrict~ (x',y') &~\text{iff}& ~ x>x' \et y=y'.
\end{array}
$$
Using the above basic relations and set-theoretic operations, one can construct new spatial relations. 
We define the composition of two spatial relations $\Carg{d}$ and $\Carg{e}$  by
$\Carg{d}\Carg{e}=\bigsettc{(p,r)}{\ex{q\in\bbP}~(p,q)\in\Carg{d} \et (q,r)\in\Carg{e}}$.
We are interested in the following spatial relations:
$$
\begin{array}{rclrcl}
  \CULp &~=&~ \CTstrict\CLstrict\cup\CLstrict   & \qquad\qquad\qquad 
  \CURp &~=&~ \CTstrict\CRstrict\cup\CRstrict \\[2ex]
  \CLLp &~=&~ \CBstrict\CLstrict\cup\CLstrict   & \qquad\qquad\qquad 
  \CLRp &~=&~ \CBstrict\CRstrict\cup\CRstrict
\end{array}
$$

Observe that, up to a rotation of the axes, the derived relations $\CUL=\CTstrict\CULp$, $\CUR=\CTstrict\CURp$, 
$\CLL=\CBstrict\CLLp$, and $\CLR=\CBstrict\CLRp$ can be viewed as the four cone-shaped cardinal relations 
`North', `East', `West' and `South' \cite{cardinal_directions} (see Figure \ref{fig:spatial-relations} - right). 
\begin{figure}[!!t]
\centering
\includegraphics[scale=0.8]{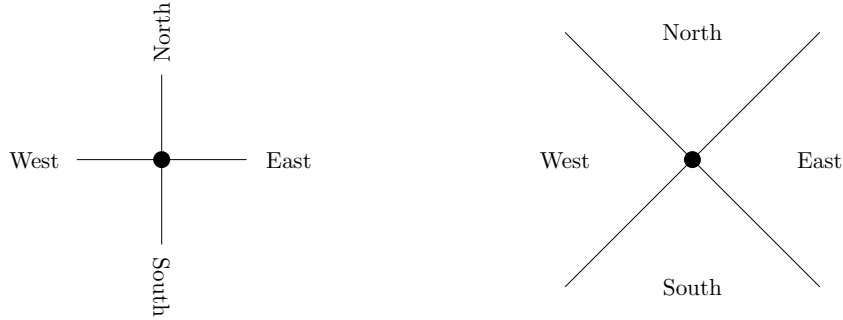}
\caption{Projection-based and cone-shaped cardinal directions.}
\label{fig:spatial-relations}
\end{figure}

\medskip
We introduce Cone Logic as a modal logic based on specific spatial relations. As such, it can express 
properties of single elements 
(the labels associated with the points in a plane) and binary relationships between elements
(it admits existential quantifications over points that satisfy a given spatial relation). 
The modal operators of Cone Logic are induced by the {\sl six spatial relations}
$\CTstrict$, $\CBstrict$, $\CULp$, $\CURp$, $\CLLp$, $\CLRp$ (the reason for such a choice will become evident
in the following). Unless otherwise specified, hereafter the term ``spatial relation'' will always refer 
to one of these six relations.

Given a set $\Sigma$ of proposition letters, formulas of Cone Logic are built up from $\Sigma$ using the Boolean connectives 
$\neg$ and $\vel$ and the existential modalities that correspond to the six spatial relations:
\begin{align*}
  \varphi &\:\:\::=\:\:\: a \tag{$\forall a\in\Sigma$} \\
          &\:\:\:\:\:\:\:||\:\:\: \neg\varphi' \:\:\:||\:\:\: \varphi'\vel\varphi'' \\
          &\:\:\:\:\:\:\:||\:\:\: \!\!\Darg{d}\varphi' \qquad\qquad 
          \tag{$\forall \Carg{d}\in\big\{\CTstrict, \CBstrict, \CULp, \CURp, \CLLp, \CLRp\big\}$}
\end{align*}
We evaluate Cone Logic formulas over labellings of the plane or (sub)regions of it, starting from an initial point.
Precisely, our models are structures of the form $\ang{P,(R_a)_{a\in\Sigma},p}$, where $P\subseteq\bbP$, $R_a\subseteq P$
for all $a\in\Sigma$, and $p\in P$. The formal semantics is defined as follows:
\begin{itemize}
  \item for all proposition letters $a\in\Sigma$, $\ang{P,(R_a)_{a\in\Sigma},p} \sat a$ iff $p\in R_a$,
  \item $\ang{P,(R_a)_{a\in\Sigma},p} \sat \neg\varphi'$ iff $\ang{P,(R_a)_{\Sigma},p} \not\sat\varphi'$,  
  \item $\ang{P,(R_a)_{a\in\Sigma},p} \sat \varphi' \vel \varphi''$ iff $\ang{P,(R_a)_{a\in\Sigma},p} \sat \varphi'$ or $\ang{P,(R_a)_{a\in\Sigma},p} \sat \varphi''$,
  \item for all spatial relations $\Carg{d}$, $\ang{P,(R_a)_{a\in\Sigma},p} \sat \Darg{d}\varphi'$ iff $\ang{P,(R_a)_{a\in\Sigma},q} \sat\varphi'$ for some
        point $q\in P$ such that $p \Carg{d} q$.
\end{itemize}
We will freely use shorthands like $\varphi'\et\varphi''=\neg(\neg\varphi' \vel \neg\varphi'')$, 
$\bot=a\et\neg a$, $\DUL\varphi=\DTstrict\DULp\varphi$, $\BUL\varphi=\neg\DUL\neg\varphi$, $\BT\varphi=\BUL\BUR\varphi$, $\B\varphi=\BUL\BLR\varphi$, and so on.

\medskip
Cone Logic is well-suited for expressing spatial relationships between points, curves, and regions 
over the plane. Below, we give an intuitive account of its expressiveness through a couple of examples. 

\begin{exa}\label{ex:rectangle}
To begin with, we show how to define an $a$-labelled open rectangular region, whose edges are aligned 
with the $x$- and $y$-axes (see Figure \ref{fig:rectangle}):
$$
\begin{array}{rcl}
  \varphi &~=&~   \D a \et \D b \et \D c \et \D d \et \D e \\[1ex]
          &~\et&~ \B (a \then \DUR a \et \DLL a) \et 
                  \B (\neg a \iff b \!\vel\! c \!\vel\! d \!\vel\! e)  \\[1ex]
          &~\et&~ \B (b \then \BL b) \et \B (c \then \BR c) \et \B (d \then \BB d) \et \B (e \then \BT e).
\end{array}
$$
\end{exa}

\begin{figure}[!!t]
\centering
\includegraphics[scale=1]{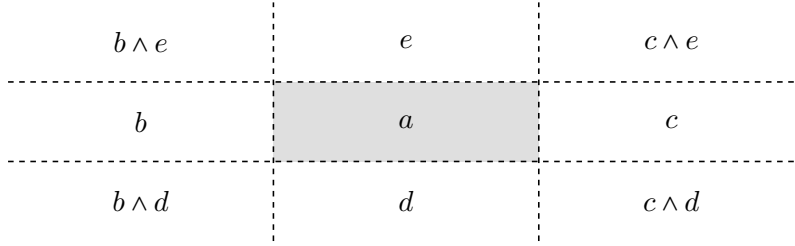}
\caption{An $a$-labelled open rectangle.}
\label{fig:rectangle}
\end{figure}

\begin{exa}\label{ex:cantor}
The second example uses the derived operators $\DUR$ and $\DLL$ to enforce non-trivial spatial relationships 
between labelled regions of the {\sl rational} plane. Let $\Sigma$ be an alphabet containing $n+2$ proposition
letters $a,b_1,...,b_n,c$ and let $<$ be the partial order over $\Sigma$ such that $a<b_i<c$, for all $1\le i\le n$, and $b_i\not<b_j$, for all $1\le i ,j\le n$ with $i\neq j$. 
As usual, we write $a\le b$ (resp., $a\ge b$) if $a=b$ or $a<b$ (resp., $a>b$). Consider now the formula $\varphi_\le$ defined as follows:
$$
\begin{array}{rcl}
  \varphi_{\le} &=&     \B\!\displaystyle\bigvee\limits_{d\in\Sigma}\!d ~~\et~~ 
                        \B\!\displaystyle\bigwedge\limits_{d\neq e}\!\neg(d\et e)                    ~\et~ \B\!\displaystyle\bigwedge\limits_{d\in\Sigma}
                        \Bigl(
                          d ~\then~ \displaystyle\bigwedge\limits_{e\ge d}\!\!\DUR e 
                                    ~\et
                                    \BUR\!\bigvee\limits_{e\ge d}\!e
                                    ~\et
                                    \displaystyle\bigwedge\limits_{e\le d}\!\!\DLL e
                                    ~\et
                                    \BLL\!\bigvee\limits_{e\le d}\!e
                        \Bigr) \ .
\end{array}
$$
The unique (up to homomorphism) labelling of the {\sl rational} plane $\bbQ\times\bbQ$ that satisfies $\varphi_\le$ 
is depicted in Figure \ref{fig:cantor}. Notice that each $b_i$-labelled region is an infinite union of disjoint 
open rectangles (the coordinates of their corners are given by pairs of irrational numbers, which, of 
course, do not belong to the rational plane). Moreover, the $b_i$-labelled open rectangles are arranged 
densely in the rational plane, that is, for all $1\le i,j,k\le n$, with $i\neq j$, all $b_i$-labelled points $(x_i,y_i)$, 
and all $b_j$-labelled points $(x_j,y_j$), with $x_i<x_j$ and $y_i>y_j$, there is a $b_k$-labelled point 
$(x_k,y_k)$ such that $x_1<x_k<x_2$ and $y_1>y_k>y_2$. 
We also observe that the formula $\varphi_\le$ cannot be satisfied by any labelling of the {\sl real} plane 
$\bbR\times\bbR$. Indeed, $\varphi_\le$ requires that the subregions $R_a,R_{b_1},...,R_{b_n},R_c$ are ``open'' 
(in the sense that they do not contain points on their boundaries) and they form a partition of the plane:
this is against the assumption that the plane is compact, as in this case boundaries would be covered by 
the subregions.
\end{exa}

\begin{figure}[!!t]
\centering
\includegraphics[scale=0.8]{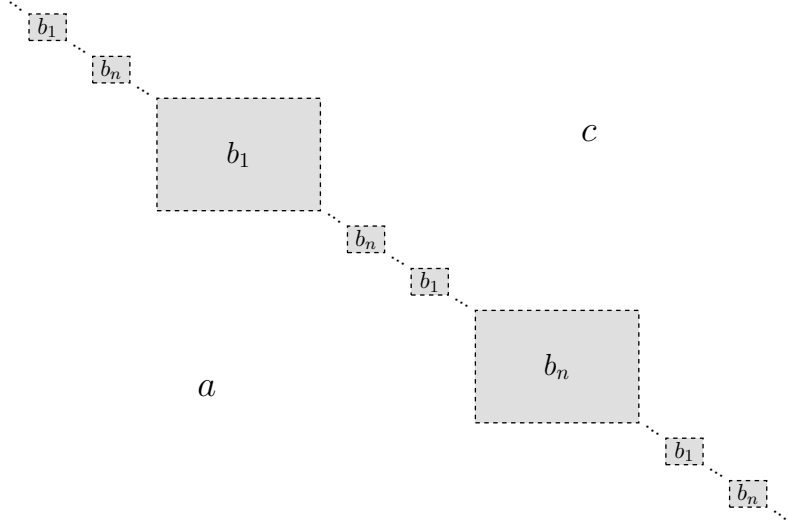}
\caption{A labelled rational plane satisfying $\varphi_\le$.}
\label{fig:cantor}
\end{figure}

In the following, we focus our attention on the \emph{satisfiability problem} for Cone Logic, which
consists of deciding whether a given formula $\varphi$ holds at some point of a labelled region of the 
(rational or real) plane. In particular, we are interested in satisfiability of formulas interpreted 
over rectangular regions of the form $X\times Y$, where $X$ and $Y$ are open or closed 
intervals\footnote{Here the term ``interval'' is used as a synonym for {\sl convex subset}. We accordingly 
                   denote intervals by $[x,y]$, $(x,y)$, $[x,y)$, $(x,y]$, where a bracket is square 
                   or round depending on whether the corresponding endpoint is included or not in the interval.}
of $\bbQ$ (resp., $\bbR$).
Before describing our decision procedure for the satisfiability problem for Cone Logic, we make a few remarks.

\begin{rem}\label{rem:rationalplanes}
In Example \ref{ex:cantor}, we showed that there exist Cone Logic formulas that can only be satisfied 
over dense non-Euclidean (e.g., rational) planes. Here, we prove that the converse does not hold, namely, 
that every formula of Cone Logic that is satisfied in some (rational or real) plane is also satisfiable 
in the rational plane. First of all, we observe that Cone Logic can be viewed as a fragment of classical 
first-order logic that uses pairs of elements of the underlying domain to denote points, some binary 
relations to represent their labels, and a (definable) dense linear order to describe the spatial relations. 
As an example, a Cone Logic formula of the form:
$$
  \varphi ~=~ \DURp a
$$
can be translated into the following, equi-satisfiable first-order formula:
$$
\begin{array}{rl}
  \tilde{\varphi}(x,y)
  &=~ \text{\it ``\,$<$ is a dense linear order with neither a minimal nor a maximal element''} \\
  &   \et\:\:\: \ex{x',y'} \:\: x<x' ~\et~ (y=y' \vel y<y') ~\et~ R_a(x',y').
\end{array}
$$
According to the above translation, if $\tilde{\varphi}(x,y)$ holds in some structure 
$\ang{L,(R_a)_{a\in\Sigma},<,i,j}$, where $R_a$ and $<$ are binary relations on the domain $L$
and $i,j$ are elements of $L$, then $(L,<)$ is a dense linear order with neither a minimal nor a maximal 
element and $\varphi$ holds in the labelled plane $\ang{L\times L,(R_a)_{a\in\Sigma},(i,j)}$. As $\tilde{\varphi}(x,y)$ 
is a first-order formula, it follows from L\"owenheim-Skolem theorem that,  without loss of generality, $L$ can be 
assumed to be countable. Finally, since $(\bbQ,<)$ is up to isomorphism the only countable dense linear order with 
neither a minimal nor a maximal element, we conclude that $\varphi$ is satisfied by a labelling of the rational plane. 
\end{rem}

\begin{rem}\label{rem:stripes}
Recall that the rational (resp., real) plane is homomorphic to any open rectangular subregion of it of
the form $X\times Y$, with $X=(x_0,x_1)$ and $Y=(y_0,y_1)$ open intervals. This means that, for the purpose 
of studying satisfiability of Cone Logic, it does not matter if we consider labellings of the entire plane 
or labellings of open rectangular subregions of it. Similarly, the complexity of the satisfiability 
problem does not change if we consider {\sl closed} rectangles. Indeed, any formula $\varphi$ of 
Cone Logic, interpreted over a region of the form $X\times Y$, where $X=(x_0,x_1)$ is an open interval, 
can be rewritten into an equi-satisfiable formula $\bar{\varphi}$, interpreted over the region 
$\bar{X}\times Y$, where $\bar{X}=[x_0,x_1]$ is a closed, non-singleton interval, and vice versa. 
As an example, the Cone Logic formula 
$$
  \varphi ~=~ \DURp a
$$ 
interpreted over a labelling of $(x_0,x_1)\times\bbQ$ can be rewritten as
$$
  \bar{\varphi} ~=~ \B\big(\BR\bot \vel \BL\bot \then a_\bot\big) \et \DURp a
$$
which is interpreted over a labelling of $[x_0,x_1]\times\bbQ$ 
(the idea is that points along left and right boundaries are labelled 
with the fresh proposition letter $a_\bot$).
\end{rem}

Thanks to the above two remarks, we can restrict our attention to satisfiability of Cone Logic
formulas over specific regions of the plane, called stripes.

\begin{defi}\label{def:stripes}
A \emph{stripe} is a 
region
of the form $X\times\bbQ$, 
where $X=[x_0,x_1]$ is a closed non-singleton interval.  
\end{defi}

The relationships between the Cone Logic and other 
two-dimensional modal logics
deserve a little discussion.\ 
Many logics interpreted over two-dimensional structures make use of projection-based modalities,
that is, modalities induced by the accessibility relations along the two orthogonal axes.
Compass Logic \cite{compass_logic} is the most notable example of these two-dimensional logics, as it
comprises the four modalities $\DTstrict$, $\DBstrict$, $\DRstrict$, $\DLstrict$, 
allowing one to move along one of the two coordinates while keeping the other coordinate constant. 
As we have already seen, modalities based on cone-shaped 
cardinal
directions can be easily 
defined in terms of projection-based modalities, e.g., $\DUR \varphi = \DTstrict\DRstrict\varphi$. 
Cone Logic can thus be viewed as a fragment of Compass Logic. However, Cone Logic inherits 
from Compass Logic only some desirable features. 
For instance, suppose that one is interested in constraining a given proposition letter $a$ to occur 
along the positive $x$-axis, and possibly somewhere else. Such a condition can be easily forced in
Compass Logic by means of the formula $\DRstrict a$. 
Cone Logic can enforce a similar constraint by means of the formula
$$
  \B(y_0 \vel \DTstrict y_0 \vel \DBstrict y_0) \et \BT \neg y_0 \et \BB \neg y_0 \et \DR (a \et y_0), \phantom{\ .~}
$$
where $y_0$ is a fresh proposition letter.
Similarly, the Compass Logic formula $\BRstrict a$ can be expressed in Cone Logic as follows:
$$
  \B(y_0 \vel \DTstrict y_0 \vel \DBstrict y_0) \et \BT \neg y_0 \et \BB \neg y_0 \et \BR (y_0 \then a).
$$
It is worth noticing, however, that the above translations can be performed at the cost 
of introducing additional labels
-- e.g.,~$y_0$ -- that can only appear along specific axes.
Hence, only boundedly many constraints of the above forms can be enforced within a single formula of Cone Logic.
We will see that such a limitation (weakening) can be traded for a positive decidability result.

 \section{Basic machinery: types, dependencies, clusters, and shadings}\label{sec:equivalences}

From now on, we refer to a generic formula $\varphi$ of Cone Logic. The basic idea underlying the decision
procedure for the satisfiability of $\varphi$ is to first look at how the spatial constraints defined 
by the subformulas of $\varphi$ can be satisfied locally over the points of the plane and then to propagate 
these constraints to larger and larger regions of the plane.
Below, we introduce some key concepts that ease such an analysis.

\begin{defi}\label{def:atom}
Let $\varphi$ be a formula of Cone Logic. The closure of $\varphi$, denoted by $\closure(\varphi)$,
is  the set of all subformulas of $\varphi$ and of their negations (we identify any subformula $\neg\neg\alpha$ 
with $\alpha$). A \emph{$\varphi$-atom} is a non-empty set 
$A\subseteq\closure(\varphi)$ such that:
\begin{itemize}
  \item for every formula $\alpha\in\closure(\varphi)$, $\alpha\in A$ iff $\neg\alpha\nin A$, 
  \item for every formula $\gamma=\alpha\vel\beta\in\closure(\varphi)$, $\gamma\in A$ iff $\alpha\in A$ or $\beta\in A$.
\end{itemize}
\end{defi}

\noindent
Note that the cardinality of $\closure(\varphi)$ is {\sl linear} in the size 
$\len{\varphi}$ of $\varphi$, while the number of $\varphi$-atoms is at most {\sl exponential} in $\len{\varphi}$. 

Let $\cP=\ang{P,(R_a)_{a\in\Sigma}}$ be a labelled region. We associate with each point $p$ in $P$ 
the set of all formulas $\alpha\in\closure(\varphi)$ such that $\ang{\cP,p}\sat\alpha$. Such a set is called 
the \emph{$\varphi$-type} of $p$ and it is denoted by $\type_\cP(p)$. 
It can be easily checked  that each $\varphi$-type is a $\varphi$-atom, but not vice versa.

\smallskip

Given a $\varphi$-atom $A$ and a spatial relation $\Carg{d}$, we denote by $\Darg{d}\req(A)$ the set of all 
formulas $\alpha\in\closure(\varphi)$ such that $\Darg{d}\alpha\in A$.These formulas can be thought of as  
the \emph{requests} of $A$ along the direction $\Carg{d}$. Similarly, we denote by $\Darg{d}\obs(A)$ the 
set of all formulas $\alpha\in A$ such that $\Darg{d}\alpha\in\closure(\varphi)$. These formulas can be thought
of as the \emph{observables} of $A$ along the direction $\Carg{d}$. Making use of these sets, we can 
associate with each spatial relation $\Carg{d}$ a corresponding relation between $\varphi$-atoms
(with a little abuse of notation, we denote it by $\Carg{d}$).

\begin{defi}\label{def:dependencies} 
Let $\Carg{d}\in\big\{\CTstrict,\CBstrict,\CULp,\CURp,\CLLp,\CLRp\big\}$ 
be a spatial relation and $A,B$ be two $\varphi$-atoms. 
We write $A \Carg{d} B$ if and only if it holds that:
$$
\begin{array}{lll}
  \Darg{e}\req(A) &~\supseteq& ~\Darg{e}\obs(B) \,\cup \Darg{e}\req(B); \\[1ex]
  \Darg{e\!'}\req(B) &~\supseteq& ~\Darg{e\!'}\obs(A) \,\cup \Darg{e\!'}\req(A),
\end{array}
$$
for all spatial relations $\Carg{e}\in\big\{\CTstrict,\CBstrict,\CULp,\CURp,\CLLp,\CLRp\big\}$ such that 
$\Carg{e}\supseteq\Carg{e}\Carg{d}$ (in particular, for $\Carg{e}=\Carg{d})$, where $\Carg{e\!'}$ is the 
inverse of $\Carg{e}$.
\end{defi}

It is worth looking at some concrete examples of the above definition. For instance, let $\Carg{d}=\CURp$ and
observe that $\Carg{e}\supseteq\Carg{e}\Carg{d}$ only if $\Carg{e}=\CURp$. In this case, the definition
amounts at saying that $A \CURp B$ iff all requests and observables of $B$ along the direction $\CURp$ 
are also requests of $A$ along $\CURp$, and, symmetrically, all requests and observables of $A$ along 
$\CLLp$ are also requests of $B$ along $\CLLp$. Let us now consider the more interesting case of $\Carg{d}=\CTstrict$. 
Here we have $\Carg{e}\supseteq\Carg{e}\Carg{d}$ iff $\Carg{e}\in\big\{\CTstrict,\CULp,\CURp\big\}$. 
In particular, we can write $A \CTstrict B$ only if the requests and the observables of $B$ along 
the direction $\CTstrict$ (resp., $\CULp$, $\CURp$) are also requests of $A$ along the direction $\CTstrict$ 
(resp., $\CULp$, $\CURp$), and symmetrically for the inverses $\CBstrict$, $\CLLp$, and $\CLRp$.

\smallskip

We conclude this short section with a few important remarks. First, we observe that the above-defined 
relations on $\varphi$-atoms are transitive, e.g., $A_1 \CURp A_2 \CURp A_3$ implies $A_1 \CURp A_3$, 
and have inverses (e.g., $A \CURp B$ iff $B \CLLp A$), exactly as the corresponding relations on points. 
Moreover, they satisfy some natural compositional properties, e.g., $A \CURp B \CTstrict C$ implies $A 
\CURp C$. The most important property, however, is the following one, which is called \emph{view-to-type 
dependency}: for all points $p,q$ of $\cP$ and all spatial relations $\Carg{d}$, 
$$
  p \Carg{d} q   \qquad\text{implies}\qquad  \type_\cP(p) \Carg{d} \type_\cP(q)
$$
(note that the converse implication does not hold).

\smallskip

The above notions can be easily extended to {\sl sets} of atoms (these sets are meant to represent sets of 
types of points in a region of the plane). First, we define a \emph{$\varphi$-cluster} as any non-empty set $C$ 
of $\varphi$-atoms. Then, for a cluster $C$ and a spatial relation $\Carg{d}$, we denote by $\Darg{d}\req(C)$ 
and $\Darg{d}\obs(C)$, respectively, the set $\bigcup_{A\in C}\Darg{d}\req(A)$ and the set $\bigcup_{A\in C}\Darg{d}\obs(A)$. Moreover, given two $\varphi$-clusters $C,D$, we write $C \Carg{d} D$ whenever $A 
\Carg{d} B$ holds for all  $A\in C$ and all $B\in D$. Finally, we associate with each non-empty subregion 
$R$ of $\cP$ its \emph{$\varphi$-shading}, which is defined as the set $\type_\cP(R)=\bigsettc{\type_\cP(p)}
{p\in R}$ and consists of all $\varphi$-types of points of $R$. 
Clearly, the formula $\varphi$ is satisfied at some point $p$ of $P$ if and only if f the shading $\type_\cP(P)$ 
contains an atom $A$ such that $\varphi\in A$. Hereafter, we shall omit the argument $\varphi$ from the 
terminology and notation so far introduced, thus calling a $\varphi$-atom (resp., $\varphi$-type, 
$\varphi$-cluster, etc.) simply an atom (resp., type, cluster, etc.).
 \section{From the plane to the binary tree}\label{sec:dectree}

In this section, we introduce a suitable notion of decomposition of a labelled region of the rational 
plane (more precisely, a labelled stripe) and we iteratively apply it in order to obtain an infinite 
decomposition tree structure that faithfully represents the original model. Then, in the next section, 
we make use of such a decomposition to establish a tree (pseudo-)model property for the satisfiable 
formulas of Cone Logic. 

\subsection{Profiles and stripe expressions}\label{subsec:profiles}

To start with, we consider the types along vertical lines of a labelled plane:

\begin{defi}\label{def:profile}
A \emph{profile} is a non-empty finite sequence $S$ of atoms and clusters 
such that, for every $1\le i\le\len{S}$, if $S(i)$ is an atom, then $1<i<\len{S}$ and both 
$S(i-1)$ and $S(i+1)$ are clusters. 
\end{defi}

We will use profiles to represent the arrangement of the types along a certain vertical line of
the labelled plane. The general idea is that one can partition the vertical line into a finite
sequence of contiguous open 
or
singleton segments in such a way that the shading of each open segment
(resp., the type of each singleton segment) coincides with the cluster (resp., atom) 
at some specific position
of the profile. As an example, Figure \ref{fig:profiles}(a) depicts a vertical line with an associated 
profile $S~=~C_1~A_2~C_3~C_4$: the first cluster $C_1$ represents the shading of an initial open segment 
of the vertical line, the atom $A_2$ represents the type of the upper endpoint of this segment, and the
clusters $C_3$ and $C_4$ represent the shadings of two adjacent open segments.
\begin{figure}[!!t]
\centering
\includegraphics[scale=0.85]{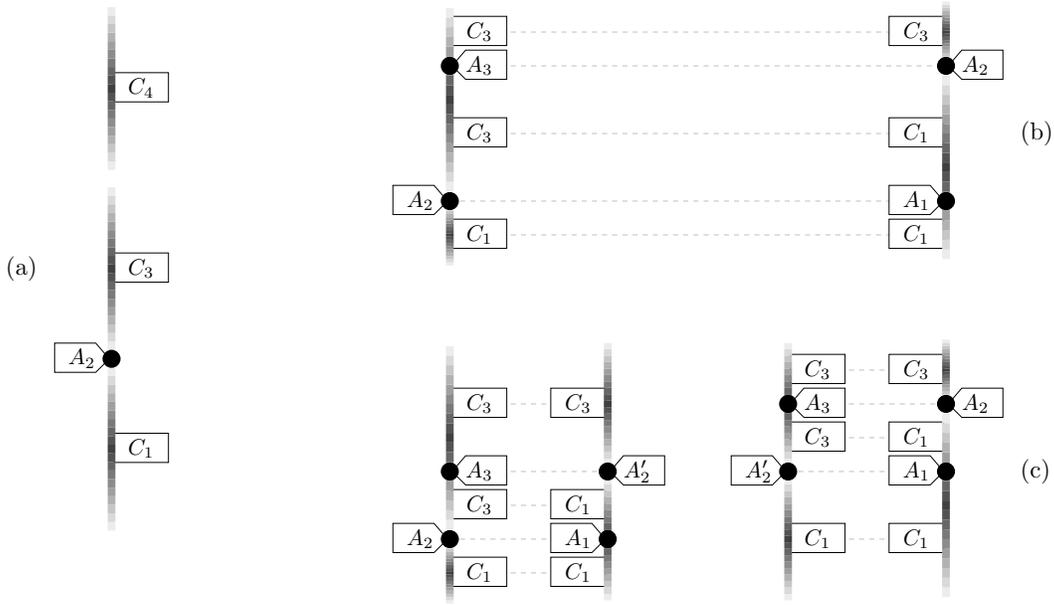}
\caption{Profiles (a), stripe expressions (b), and decompositions (c).}
\label{fig:profiles}
\end{figure}

To represent the types along the two vertical borders of a labelled stripe, we introduce
the notion of \emph{stripe expression}, which is a pair $E=(L,R)$ of \emph{left} and 
\emph{right profiles} having equal length and such that, for all $1\le i\le\len{E}$ ($=\len{L}=\len{R}$), 
$L(i)$ is an atom (resp., a cluster) if and only if $R(i)$ is an atom (resp., a cluster). 
We call any pair of the form $\big(L(i),R(i)\big)$, with $1\le i\le\len{E}$, a \emph{matched pair}. 

As an example, Figure \ref{fig:profiles}(b) depicts the left border and the right border of a labelled stripe, 
together with the associated stripe expression $E=(L,R)$, where $L~=~C_1~A_2~C_3~A_3~C_3$ and $R~=~C_1~A_1~C_1~A_2~C_3$. 

We say that an atom $A$ \emph{appears} in the left (resp., right) profile of a stripe expression 
$E=(L,R)$ if there is a position $1\le i\le\len{E}$ such that either $A=L(i)$ or $A\in L(i)$ 
(resp., either $A=R(i)$ or $A\in R(i)$) depending on whether $L(i)$ (resp., $R(i)$) is an atom or a cluster. 
By a slight abuse of notation, we denote by $\bigcup_{1\le i\le\len{E}}L(i)$ (resp., $\bigcup_{1\le i\le\len{E}}R(i)$) 
the set of all atoms that appear in the left (resp., right) profile of the stripe expression $E=(L,R)$. 

It is not difficult to see that
for every labelled stripe $\cP$, there exists a stripe expression $E$ whose left (resp., right) 
profile contains all and only the types of the points along the left (resp., right) border of $\cP$. 
For this, it suffices to consider the atoms that occur exactly once along each border -- we call
those atoms \emph{pivots} for short. The pivots will appear in the stripe expression $E$ and will be interleaved 
with the shadings of the segments that are intercepted at the coordinates of the pivots.
This shows how to construct a stripe expression $E$ that corresponds to a labelled stripe $\cP$.
Conversely, for some stripe expression $E$ there might exist no labelled stripe $\cP$ such that the shading 
of the left (resp., right) border of $\cP$ coincides with the set of all atoms appearing in the left 
(resp., right) profile of $E$. 
The reason is that the occurrences of atoms and clusters in $E$ might be inconsistent with the 
underlying requests and observables.
The rest of this section is devoted to overcome this problem, 
namely, to find suitable conditions under which a stripe expression admits a corresponding labelled stripe.
As a first step, we enforce suitable constraints on stripe expressions:

\begin{defi}\label{def:faithful}
We say that a stripe expression $E=(L,R)$ is \emph{faithful} if it satisfies the following properties:
\begin{description}   \item[(C1)] for all positions $1\le i<j\le\len{E}$, 
              we have $L(i) \,\CTstrict\, L(j)$ and $R(i) \,\CTstrict\, R(j)$;
  \medskip
  \item[(C2)] for all positions $1\le i\le\len{E}$, if $L(i)$ and $R(i)$ are clusters, 
              then we have $L(i) \,\CTstrict\, L(i)$ and $R(i) \,\CTstrict\, R(i)$;
  \medskip
  \item[(C3)] for all positions $1\le i\le j\le\len{E}$, 
              we have $L(i) \,\CURp\, R(j)$ and $L(j) \,\CLRp R(i)$;
  \medskip
  \item[(C4)] for all positions $1\le i\le\len{E}$, if $L(i)$ and $R(i)$ are atoms, 
              then we have 
\[\eqalign{
  \DTstrict\req(L(i)) &\subseteq
                              \bigcup_{j>i}\DTstrict\obs(L(j))\cr
  \DTstrict\req(R(i)) &\subseteq~ \bigcup_{j>i}\DTstrict\obs(R(j))}
  \enspace\eqalign{&,\cr&,}\enspace
  \eqalign{
  \DBstrict\req(L(i)) &\subseteq 
                              \bigcup_{j<i}\DBstrict\obs(L(j))\cr
  \DBstrict\req(R(i)) &\subseteq~ \bigcup_{j<i}\DBstrict\obs(R(j))\,;}
\]  \medskip
  \item[(C5)] for all positions $1\le i\le\len{E}$, if $L(i)$ and $R(i)$ are clusters, 
              then we have
\[\eqalign{
  \DTstrict\req(L(i)) &\subseteq 
                             \bigcup_{j\ge i}\DTstrict\obs(L(j))\cr
  \DTstrict\req(R(i)) &\subseteq~ \bigcup_{j\ge i}\DTstrict\obs(R(j))}
\enspace\eqalign{&,\cr&,}\enspace
  \eqalign{
  \DBstrict\req(L(i)) &\subseteq 
                              \bigcup_{j\le i}\DBstrict\obs(L(j))\cr
  \DBstrict\req(R(i)) &\subseteq~ \bigcup_{j\le i}\DBstrict\obs(R(j))\,.}
\]  \medskip
\end{description}
\end{defi}

\noindent Intuitively, the purpose of the first three conditions is to guarantee some {\sl consistency} constraints 
on the relationships between the requests and the observables of the atoms that appear in the left and right 
profiles of the given stripe expression, with the idea that the profiles represent the shadings of the two borders 
of a concrete labelled stripe. Similarly, the purpose of the last two conditions is to guarantee the {\sl fulfilment} 
of the existential requests of the left and right profiles along the two vertical directions $\CTstrict$ and $\CBstrict$.
From now on, we tacitly assume that every stripe expression is faithful (this can be easily checked).

\smallskip

Before enforcing further constraints on stripe expressions, we address a problem related to their representation.
First of all, we observe that a cluster that appears in a stripe expression may contain exponentially many atoms.
Thus, in principle, any explicit representation of a stripe expression may require exponential space. We cope with 
this problem by restricting to stripe expressions that are maximal with respect to a suitable partial order.
Formally, given two stripe expressions $E=(L,R)$ and $E'=(L',R')$, we write 
$E \unlhd E'$ 
(and read $E$ is \emph{dominated} by $E'$) if and only if
\begin{enumerate}[label=\roman*)]   \item $\len{E} = \len{E'}$;
  \item for all positions $1\le i\le\len{E}$, either $L(i)$, $R(i)$, $L'(i)$, and $R'(i)$ are atoms, 
        or $L(i)$, $R(i)$, $L'(i)$, and $R'(i)$ are clusters;
  \item for all positions $1\le i\le\len{E}$, either $L(i)=L'(i)$ and $R(i)=R'(i)$ hold, or $L(i)\subseteq L'(i)$ and $R(i)\subseteq R'(i)$ hold, 
        depending on whether $L(i)$, $R(i)$, $L'(i)$, and $R'(i)$ are atoms or clusters.
\end{enumerate}
As $\unlhd$ is a partial order, it makes sense to talk about \emph{maximal} (faithful) stripe expressions,
that is, stripe expressions which are not strictly dominated by other ones. 
The benefit of such a notion is that, given a cluster $C$ of a maximal stripe expression $E=(L,R)$, 
that is, $C=L(i)$ or $C=R(i)$ for some $1\le i\le\len{E}$, and  a generic atom $A$, one has
$$
  A\in C \qquad\text{if (and only if)}\qquad \begin{cases}
                                  \Darg{d}\req(A) ~=~\Darg{d}\req(C) \\[1ex]
                                  \Darg{d}\obs(A) ~\subseteq~ \Darg{d}\obs(C)
                                \end{cases}
                                \quad\text{for all spatial relations $\Carg{d}$}.
$$
It immediately follows that each cluster of a maximal stripe expression can be succinctly represented by listing 
all its requests and observables (recall that the number of requests and observables is at most 
linear in $\len{\varphi}$). 

In addition, one observes the following. If $E=(L,R)$ is a stripe expression and $1\le i<j\le\len{E}$ are the
positions of two {\sl different} matched pairs of clusters, that is, $\big(L(i),R(i)\big)\neq\big(L(j),R(j)\big)$,
then, due to the constraints of Definition \ref{def:faithful}, at least one of the following non-containments
is satisfied for some spatial relation $\Carg{d}\in\big\{\CTstrict,\CURp,\CULp\big\}$ and its inverse $\Carg{\inv{d}}$:
$$
\begin{array}{rclrcl}
  \Darg{d}\req(L(i)) &\,\supsetneq& \Darg{d}\req(L(j))  & \qquad\qquad 
  \Darg{d}\req(R(i)) &\,\supsetneq& \Darg{d}\req(R(j)) \\[2ex]
  \Darg{\inv{d}}\req(L(j)) &\,\supsetneq& \Darg{\inv{d}}\req(L(i))  & \qquad\qquad 
  \Darg{\inv{d}}\req(R(j)) &\,\supsetneq& \Darg{\inv{d}}\req(R(i))\ .
\end{array}
$$
It is worth noticing that $\Darg{d}\req(L(i)) \subseteq\! \Darg{d}\req(L(j))$ 
implies $\Darg{d}\req(L(i)) =\! \Darg{d}\req(L(j))$, and the same for the other conditions.
This means that any stripe expression can contain 
{\sl at most linearly many distinct matched pairs of clusters}.

From now on, we restrict ourselves to (faithful) {\sl maximal} stripe expressions that contain 
{\sl pairwise distinct matched pairs of clusters}. Thanks to this assumption and to the previous arguments, 
we can represent each stripe expression using space polynomial in $\len{\varphi}$.
Since the matched pairs of clusters in a stripe expression are pairwise distinct,
there are indeed at most linearly many such pairs in a stripe expression. Moreover, each matched 
pair of atoms is surrounded by two matched pairs of clusters. This implies that the length of
a stripe expression is at most linear in $\len{\varphi}$. Finally, as we argued earlier, 
each pair of atoms/clusters in a maximal stripe expression can be represented by
listing all the requests and observables in it, which are again linear in $\len{\varphi}$.

\subsection{Recursive decompositions of stripes}\label{subsec:decompositions}

Roughly speaking, Conditions C1--C5 of
Definition \ref{def:faithful} provide us with a guarantee that the natural spatial interpretation 
of a stripe expression $E$ is locally consistent with the view-to-type dependency. To enforce the 
global consistency and, in particular, to enforce the fulfilment of all existential requests, 
we need to introduce a suitable notion of decomposition. 
We start by dividing a given labelled stripe into a pair of thinner adjacent labelled sub-stripes; then,
we apply the decomposition recursively to every emerging sub-stripe. This yields an infinite 
tree-shaped decomposition of the initial structure, where each vertex of the tree represents a 
labelled (sub-)stripe and each edge represents a containment relationship between two labelled 
(sub-)stripes. 

\smallskip

To start with, we introduce a suitable equivalence relation between profiles. Intuitively, 
the equivalence relation identifies profiles that can be associated to the same vertical
line.

\begin{defi}\label{def:equivalence}
Two profiles $S$ and $S'$ are said to be \emph{equivalent} if 
\begin{itemize}
  \item the clusters that appear in $S$ and in $S'$ are the same;
  \item for each atom $S(i)$ that appears in $S$, either $S(i)$ also appears in $S'$ 
        or the two adjacent clusters $S(i-1)$ and $S(i+1)$ coincide and they both
        contain the atom $S(i)$, and symmetrically for each atom $S'(i)$ of $S$.
\end{itemize}
\end{defi}

\noindent As an example, two profiles of the form $S~=~C_1~C_2~C_2$ and $S'~=~C_1~A_1~C_1~C_2$,
with $A_1\in C_1$, are equivalent; on the contrary, the profile $S$ is not equivalent 
to any profile $S''~=~C_1~A_1~C_2~C_2$, unless $A_1\in C_1$ and $C_1=C_2$.

\smallskip

Decompositions of stripe expressions are defined as follows. 

\begin{defi}\label{def:decompositions}
 Let $E=(L,R)$ be a stripe expression. A \emph{decomposition} of $E$ is any pair of 
stripe expressions $(E_1,E_2)$, with $E_1=(L_1,R_1)$ and $E_2=(L_2,R_2)$, that satisfies
the following matching conditions:
\begin{description}   \item[(M1)] $L_1$ and $L$ are equivalent,
  \medskip
  \item[(M2)] $R_2$ and $R$ are equivalent,
  \medskip
  \item[(M3)] $R_1$ and $L_2$ are equivalent.
  \medskip
\end{description}
\end{defi}

We say that a matched pair $\big(L(i),R(i)\big)$ of the stripe expression $E$ \emph{corresponds}
to a matched pair $\big(L_1(i_1),R_1(i_1)\big)$ of the left stripe expression $E_1$ under the 
decomposition $(E_1,E_2)$ of $E$ if there is a position $1\le i_2\le\len{E_2}$ such that $L(i)\equalin L_1(i_1)$, 
$R(i)\equalin R_2(i_2)$, and $R_1(i_1)\equalin L_2(i_2)$ hold, where $\equalin$ denotes either 
the identity relation $=$ between atoms or between clusters, or the membership relation $\in$ 
between atoms and clusters, or the inverse membership relation $\ni$ between clusters and atoms.
A symmetric definition can be given for correspondences with matched pairs of the right stripe 
expression $E_2$.

As an example, Figure \ref{fig:profiles}(c) depicts a decomposition of the stripe expression $E=(L,R)$, 
where $L~=~C_1~A_2~C_3~A_3~C_3$ and $R~=~C_1~A_1~C_1~A_2~C_3$. Note that, under such a decomposition, 
the matched pair $(C_3,C_1)$ of $E$ corresponds to the three matched pairs $(C_3,C_1)$, $(A_3,A'_2)$, 
and $(C_3,C_3)$ of $E_1$ and to the three matched pairs $(C_1,C_1)$, $(A'_2,A_1)$, and $(C_3,C_1)$
of $E_2$.

\smallskip

By iteratively applying decompositions, starting from an initial stripe expression, one 
obtains an infinite tree-shaped structure, called decomposition tree:

\begin{defi}\label{def:decompositiontree}
A \emph{decomposition tree} is an infinite complete binary labelled 
tree $\cT=\ang{V,E,\prj{1}{},\prj{2}{}}$, where
\begin{itemize}
  \item $V$ is the set of vertices;
  \item $\prj{1}{}$ and $\prj{2}{}$ are the left and right successor relations;
  \item $E$ is a labelling function that associates with each vertex $v\in V$ a stripe expression $E(v)$
        in such a way that the pair $\big(E(\prj{1}{v}),\,E(\prj{2}{v})\big)$ is a decomposition of 
        the stripe expression $E(v)$.
\end{itemize}
\end{defi}

\noindent Hereafter, we fix a decomposition tree $\cT=\ang{V,E,\prj{1}{},\prj{2}{}}$. Given a vertex $v$ in $\cT$
and the associated stripe expression $E(v)=(L,R)$, we shortly denote by $E(v)[L]$ (resp., $E(v)[R]$) 
its left profile $L$ (resp., its right profile $R$). 

We observe that, due to the matching conditions M1--M3, 
if $v$ and $v'$ are two vertices of a decomposition tree $\cT=\ang{V,E,\prj{1}{},\prj{2}{}}$ and $v'$ is
right-adjacent to $v$ (possibly without being a sibling), then the right profile $E(v)[R]$ of $v$ and 
the left profile $E(v')[L]$ of $v'$ are equivalent. Note that this is also consistent with the spatial 
interpretation of stripe expressions that imposes the right profile of $v$ and the left profile of $v'$
to represent the same vertical line.

\smallskip

We now enforce suitable conditions on the decomposition tree $\cT$ in order to guarantee that every
existential request of every atom that appears in a stripe expression $E(v)$ is eventually fulfilled 
by an observable of an atom in a (possibly different) stripe expression $E(v')$.  
Recall that, thanks to Conditions C4--C5 of
Definition \ref{def:faithful}, all requests along the directions $\CTstrict$ and $\CBstrict$
are fulfilled within the same stripe expression $E(v)$. It thus remains to consider the requests 
along the directions $\CULp$, $\CURp$, $\CLLp$, $\CLRp$. In the following, we consider a generic 
vertex $v$ of $\cT$ and we look at the {\sl right}-oriented requests of the atoms/clusters that 
appear in the {\sl left} profile $E(v)[L]$; symmetrically, we look at the {\sl left}-oriented 
requests for the atoms/clusters that appear in the {\sl right} profile $E(v)[R]$. 
For the sake of brevity, we only provide the fulfilment conditions for the requests of the 
left profile $E(v)[L]$ along the direction $\CURp$ (the reader can easily devise the correct 
definitions for the remaining directions):

\begin{defi}\label{def:fulfillment}
Let $v$ be a vertex of the decomposition tree $\cT$ and let $\alpha$ be a formula in $\closure(\varphi)$. 
We say that $\alpha$ is \emph{locally fulfilled as a $\DURp$-request at vertex $v$} 
if for all positions $1\le i\le\len{E(v)}$, at least one of the following conditions holds:
\begin{description}   \item[(F1)] $\alpha\nin\DURp\req\big(E(v)[L](i)\big)$;
  \medskip
  \item[(F2)] $\alpha\in\DURp\req\big(E(v)[R](i)\big)$;
  \medskip
  \item[(F3)] $\alpha\in\DURp\obs\big(E(v)[R](j)\big)$ for some position $i\le j\le\len{E(v)}$;
  \medskip
  \item[(F4)] there exist two positions $1\le i_1\le j_1\le\len{E(\prj{1}{v})}$ such that
        \begin{enumerate}[label=\roman*)]           
        \item the matched pair $\big(E(v)[L](i),\,E(v)[R](i)\big)$ corresponds
              to the matched pair $\big(E(\prj{1}{v})[L](i_1),\,E(\prj{1}{v})[R](i_1)\big)$ 
                under the decomposition $\big(E(\prj{1}{v}),\,E(\prj{2}{v})\big)$ of $E(v)$,
          \item $\alpha\in\DURp\obs\big(E(\prj{1}{v})[R](j_1)\big)$.
        \end{enumerate}
  \medskip
\end{description}
\end{defi}

\noindent We are now able to express the conditions that make a fulfilled decomposition tree
a valid representation of some concrete labelled stripe:

\begin{defi}\label{def:globallyfulfilled}
A decomposition tree $\cT$ is \emph{globally fulfilled} if it satisfies the following conditions:
\begin{description}   \item[(G1)] if $v_0$ is the root of $\cT$, for all spatial relations
        $\Carg{d}\in\big\{\CURp,\CLRp\big\}$ (resp., $\Carg{d}\in\big\{\CULp,\CLLp\big\}$)
        and all positions $1\le i\le\len{E(v_0)}$, the set $\Darg{d}\req\big(E(v_0)[R](i)\big)$ 
        (resp., $\Darg{d}\req\big(E(v_0)[L](i)\big)$) is empty;
  \medskip
  \item[(G2)] for every formula $\alpha\in\closure(\varphi)$, every spatial relation $\Carg{d}$, 
        and every infinite path $\pi$ in $\cT$, there exist infinitely many vertices $v$ along $\pi$ 
        such that $\alpha$ is locally fulfilled as a $\Darg{d}$-request at vertex $v$.
  \medskip
\end{description}
\end{defi}

\noindent Finally, we say that a globally fulfilled decomposition tree $\cT$
\emph{satisfies} $\varphi$ if it contains a ($\varphi$-)atom $A$ such that $\varphi\in A$.

\section{A tree pseudo-model property}\label{sec:pseudomodel}

In this section, we establish a tree pseudo-model property for 
satisfiable formulas of Cone Logic. 
We first show that, given any labelled stripe $\cP=\ang{X\times\bbQ,(R_a)_{a\in\Sigma}}$
-- e.g., a model of $\varphi$ -- there is a globally fulfilled decomposition tree $\cT$ 
whose stripe expressions {\sl contain} at least the types of the points of $\cP$ 
(Theorem \ref{prop:completeness}).
Then, we prove that, given a globally fulfilled decomposition tree $\cT$, there is a labelled stripe $\cP=\ang{X\times\bbQ,(R_a)_{a\in\Sigma}}$ whose shading {\sl coincides} with the set of all atoms that appear in the stripe expressions 
of $\cT$ (Theorem \ref{prop:soundness}).
The two results together provide us with a way to represent \emph{over-approximations} of shadings 
of labelled stripes by means of globally fulfilled decompositions trees (formally, an over-approximation of a stripe is a set of types that contains the shading of that stripe). 

In Section \ref{sec:solution} we shall see how the correspondence between labelled stripes and 
globally fulfilled decompositions trees allows us to reduce the satisfiability problem for a 
formula $\varphi$ of Cone Logic to the problem of deciding the existence of a globally fulfilled decomposition tree that satisfies $\varphi$.

\begin{thm}[completeness]\label{prop:completeness}
For every labelled stripe $\cP=\ang{X\times\bbQ,(R_a)_{a\in\Sigma}}$, there is a globally fulfilled 
decomposition tree $\cT=\ang{V,E,\prj{1}{},\prj{2}{}}$ such that 
$$
  \type_{\cP}\big(X\times\bbQ\big) 
  \quad\subseteq
  \bigcup\limits_{\begin{smallmatrix}v\in V\\1\le i\le\len{E(v)}\end{smallmatrix}}\Bigl(~E(v)[L](i) ~\cup~ E(v)[R](i)~\Bigr).
$$
\end{thm}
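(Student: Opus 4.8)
The plan is to construct the decomposition tree $\cT$ directly from the given labelled stripe $\cP$ by recursively bisecting the stripe's $x$-interval and reading off the profiles of the resulting vertical borders. Concretely, I would first fix a faithful, maximal stripe expression $E(v_0)$ for the root whose left profile records (as pivots and intercepted shadings) exactly the types along the left border $\{x_0\}\times\bbQ$ of $\cP$ and whose right profile does the same for $\{x_1\}\times\bbQ$; such an $E(v_0)$ exists by the discussion in Section~\ref{subsec:profiles}, and Conditions C1--C5 hold because they are exactly the view-to-type dependencies that concrete points along a vertical line must satisfy. For the recursion, given a vertex $v$ already associated with a sub-stripe $[a,b]\times\bbQ$ of $\cP$ and its stripe expression $E(v)$, pick any interior coordinate $a<c<b$ (say the midpoint of $a,b$ inside $\bbQ$), let the left child correspond to $[a,c]\times\bbQ$ and the right child to $[c,b]\times\bbQ$, and take $E(\prj{1}{v})$, $E(\prj{2}{v})$ to be maximal faithful stripe expressions reading the borders $x=a,x=c$ and $x=c,x=b$ respectively. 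Matching conditions M1--M3 then hold because $E(\prj{1}{v})[L]$ and $E(v)[L]$ are two profiles of the \emph{same} vertical line $x=a$ (hence equivalent in the sense of Definition~\ref{def:equivalence}), and similarly for M2 and for M3 (the shared line $x=c$). This simultaneously builds the tree and the required inclusion: every type $\type_\cP(p)$ with $p=(x,y)$ lies on some vertical line $x$; since bisection makes the $x$-coordinate of some internal border converge to $x$, or $x$ already equals a border coordinate at some finite depth, $\type_\cP(p)$ appears in some $E(v)[L](i)\cup E(v)[R](i)$.

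Next I would verify global fulfilment. Condition G1 is immediate: the left border $\{x_0\}\times\bbQ$ of the original stripe has no points to the west, so no atom realized there can carry a $\CULp$- or $\CLLp$-request, and symmetrically on the right; we just ensure $E(v_0)$ is built from the actual border types (or, if we first pad $\cP$ via Remark~\ref{rem:stripes} to make borders "dead", the request sets are literally empty). For G2, fix a formula $\alpha$, a spatial relation $\Carg{d}$ (by symmetry, say $\Carg{d}=\CURp$ and $\alpha$ a $\DURp$-request), and an infinite path $\pi$. Along $\pi$ the associated nested sub-stripes have $x$-intervals shrinking to a single coordinate $x^*$ (this is where I use that each step strictly splits the interval and picks an interior point — with midpoints the lengths go to $0$). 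Consider any position $i$ in $E(v)[L]$ for $v$ on $\pi$ and the corresponding matched pair, tracking a point or open segment $\sigma$ on the vertical line $x=a_v$. If the $\DURp$-request $\alpha$ belonging to $\sigma$ is witnessed by a point $q$ of $\cP$ with $q$ to the upper-right of $\sigma$ and $q$ lies at $x$-coordinate $>a_v$: then once the path has descended far enough that the sub-stripe's left border at $x=a_v$ stays fixed while its right border has moved left of $\operatorname{x-coord}(q)$... no — rather, $q$ eventually lands inside (or on the right border of) the sub-stripe associated to a node on $\pi$, and then either F2/F3 holds at that node (the request/observable $\alpha$ is recorded at some position $j\ge i$ of the right profile), or, if $q$ lies strictly between $a_v$ and the current right border, F3 applies at that node because $q$'s type occurs at an interior cluster that the profile of that node intercepts between positions $i$ and $\len{E(v)}$; and if $q$ happens to sit exactly on the split line used at that node, F4 is what records it. Since this happens at cofinitely many — in particular infinitely many — nodes along $\pi$, G2 follows.

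The main obstacle I expect is precisely the bookkeeping in the previous paragraph: showing that an existential $\CURp$-witness in the \emph{concrete} plane actually materializes as an $\DURp$-\emph{observable} at an appropriate position of the \emph{right} profile of some deep node of $\pi$, rather than "escaping to the left" indefinitely. The delicate point is geometric: a point $q$ witnessing a request of a segment $\sigma$ on line $x=a_v$ has some fixed $x$-coordinate $x_q>a_v$; along $\pi$ the right border of the sub-stripe either stabilizes above $x_q$ forever (then $q$ is always interior, and one must check the profiles faithfully intercept $q$'s vertical line at a position $\ge i$ — here Conditions C3 and the correspondence relation are the technical tools, plus the fact that stripe expressions for $x=a_v$ at different depths are equivalent so positions can be matched) or the right border eventually crosses below $x_q$ (then there is a node where $x_q$ lies on or just to the right of the right border, and F2/F3/F4 kicks in at that node, after which it is preserved). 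I would handle the "stabilizes above" case by a uniform-continuity-style argument: at each such node the intercepted shading at $x=a_v$ between $\sigma$'s position and the top must, by C5/C3 applied to the concrete types, already contain the observable $\alpha$, so F3 holds; and the "eventually crosses" case by an explicit choice of witnessing node. Both cases yield infinitely many good nodes on $\pi$, completing the verification of G2 and hence the theorem; finally, $\varphi\in\type_\cP(p)$ for the model's distinguished point $p$ gives an atom $A$ with $\varphi\in A$ appearing in some $E(v)$, so $\cT$ satisfies $\varphi$.
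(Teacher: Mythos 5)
Your overall skeleton (recursively subdivide the stripe by vertical lines, read off profiles of the border lines, then check C1--C5, M1--M3, G1--G2) is the same as the paper's, but the one place where you depart from it -- splitting at the midpoint (or at an arbitrary interior coordinate) and then trying to verify the type-inclusion and Condition G2 a posteriori -- is precisely where the argument breaks, and it is where the paper's key idea lives. Two concrete failures. First, the inclusion: profiles only record types of points that lie \emph{on border lines}, so $\type_\cP(p)$ for $p=(x^*,y)$ enters the tree only if $x^*$ eventually equals some border coordinate; with midpoint bisection the borders are only the finitely generated midpoint combinations of the original endpoints, so most rationals are never borders, and a type realized only on such a line (e.g.\ a letter labelling a single point $(x^*,y)$) never appears in any $E(v)[L](i)\cup E(v)[R](i)$ -- ``the borders converge to $x^*$'' does not help, since no profile ever samples that line. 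Second, G2 can fail outright for your tree: if the only $\DURp$-observables of some $\alpha$ lie on a single line $x=x^*$ that is never a border, consider the path whose substripes always contain $x^*$ strictly in their interior; at the relevant position of every vertex on it, F1 fails (left-border points do request $\alpha$), F2 fails (right-border points have no witness to their upper right, as $x^*<x_v^R$), and F3/F4 fail (no observable of $\alpha$ on the right border or on the split line). Your proposed rescue via ``C5/C3 applied to the concrete types'' is unsound: C5 only guarantees fulfilment of vertical ($\CTstrict$/$\CBstrict$) requests within the same border line and says nothing about where a $\CURp$-witness must be recorded, so interior witnesses can escape all profiles forever.

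The paper avoids both problems by not using midpoints: it fixes an enumeration $\theta$ of $X$ and assigns to each vertex a parity modulo $1+4\cdot\len{\closure(\varphi)}$. At parity-$0$ vertices the split coordinate is the first enumerated coordinate strictly inside the current interval, so every $x\in X$ eventually becomes a border of some substripe -- this is what yields the inclusion $\type_\cP(X\times\bbQ)\subseteq\bigcup_{v,i}\bigl(E(v)[L](i)\cup E(v)[R](i)\bigr)$. At a vertex of parity $(\lambda,\alpha,\Carg{d})$ the split line is forced to pass through the $x$-coordinate of a \emph{concrete witness} $q$ of the request $\alpha$ (chosen from the model), which makes $\alpha$ locally fulfilled as a $\Darg{d}$-request at that vertex by construction; since every parity occurs infinitely often along every infinite path, G2 follows immediately, with no a posteriori case analysis. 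So the missing idea in your proposal is this fair scheduling of the split coordinates $x_v^M$: it must be built into the construction of the tree, not recovered afterwards from an arbitrary bisection.
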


\begin{proof}
Let $\cP=\ang{X\times\bbQ,(R_a)_{a\in\Sigma}}$ be a labelled stripe, where $X$ is a closed interval of the rational numbers, and let $T=\ang{V,\prj{1}{},\prj{2}{}}$ be the infinite, complete, and unlabelled  binary tree. We need to associate with each vertex $v$ of $T$ a suitable stripe expression $E(v)$. 
To do that, we recursively divide the labelled stripe $\cP$ into substripes, each one corresponding to some vertex 
$v$ of $T$; then, we collect the types of the points along the borders of the emerging (sub)stripes and accordingly construct the stripe expressions. 
There is, however, a little complication in this construction, due to the fact that the resulting decomposition tree must be globally fulfilled and it must contain all the types of the points in $\cP$. 
To enforce these conditions, we need to choose properly the $x$-coordinates along which we 
divide the labelled (sub)stripe associated with each vertex $v$. 

Before turning to the main construction, we give some preliminary definitions. We fix, once and for all, an enumeration $\theta:\bbN\then X$ of the rational numbers in the closed interval $X$ (recall that the set $X$ is countable). 
Moreover, we define the \emph{parity} of a vertex $v$ in $T$ to be the distance from the root modulo $1+4\cdot\len{\closure(\varphi)}$. The parity value $0$ will play a special role, while
the parity values from $1$ to $4\cdot\len{\closure(\varphi)}$ are identified with triples of the form $(\lambda,\alpha,\Carg{d})$, 
where $\lambda\in\{L,R\}$, $\alpha\in\closure(\varphi)$, and either $\Carg{d}\in\big\{\CURp,\CLRp\big\}$ or $\Carg{d}\in\big\{\CULp,\CLLp\big\}$
depending on whether $\lambda=L$ or $\lambda=R$. By a slight abuse of terminology, we say that a vertex $v$ has parity $0$ or $(\lambda,\alpha,\Carg{d})$.

\smallskip
\noindent
{\bf The construction of the decomposition tree.} \
We start by associating with each vertex $v$ of $T$ (i) a stripe $[x_v^L,x_v^R]\times\bbQ$, with $x_v^L,x_v^R\in X$,
and (ii) a stripe expression $E_v$ whose left and right profiles contain, respectively, the types of the 
points along the left border $P_v^L=\{x_v^L\}\times\bbQ$ and the right border $P_v^R=\{x_v^R\}\times\bbQ$.
In doing that, we shall guarantee that if $(\lambda,\alpha,\Carg{d})$ is the parity of the vertex $v$, then $\alpha$ 
is locally fulfilled as a $\Darg{d}$-request at the vertex $v$ (intuitively, this gives a fair policy for the fulfilment 
of all requests at all vertices). We give such definitions by exploiting an induction on the distance of the vertex 
$v$ from the root. If $v_0$ is the root of $T$, then we simply let $x_{v_0}^L=\min(X)$ and $x_{v_0}^R=\max(X)$. 
Consider now a generic vertex $v$ in $T$ and suppose, by inductive hypothesis,
that the two coordinates $x_v^L$ and $x_v^R$ have been defined. We consider the types of the points along the left 
border $P_v^L=\{x_v^L\}\times\bbQ$ and along the right border $P_v^R=\{x_v^R\}\times\bbQ$ of the corresponding stripe $[x_v^L,x_v^R]\times\bbQ$, and we introduce an equivalence relation $\sim_v$ over $\bbQ$ such that $y\sim_v y'$ if and only if, for all spatial relations $\Carg{d}\in\big\{\CTstrict,\CBstrict,\CULp,\CURp,\CLLp,\CLRp\big\}$, we have:
$$
\begin{array}{rclrcl}
  \Darg{d}\req(x_v^L,y) &\,=& \Darg{d}\req(x_v^L,y') & \quad\quad\quad  \Darg{d}\req(x_v^R,y) &\,=& \Darg{d}\req(x_v^R,y') \\[2ex]
  \Darg{d}\obs(x_v^L,y) &\,=& \Darg{d}\obs(x_v^L,y') & \quad\quad\quad  \Darg{d}\obs(x_v^R,y) &\,=& \Darg{d}\obs(x_v^R,y')
\end{array}
$$
(for the sake of brevity, we denote by $\Darg{d}\req(x,y)$ and $\Darg{d}\obs(x,y)$, respectively, 
the set of $\Darg{d}$-requests and the set of $\Darg{d}$-observables of the type of the point $p=(x,y)$). 

It can be easily checked (e.g., by exploiting view-to-type dependency) that the equivalence relation $\sim_v$ 
has finite index and it induces a partition of $\bbQ$ into some subsets $Y_{v,1}<...<Y_{v,k_v}$ (here we write
$Y<Y'$ as a shorthand for $y<y'$ for all $y\in Y$ and all $y'\in Y'$). 
Then, we refine the partition into a finite sequence of convex sets $Y'_{v,1}<...<Y'_{v,h_v}$, with $h_v\ge k_v$, 
that are either {\sl singletons} or {\sl open intervals}. 
Accordingly, we divide the left border $P_v^L$ (resp., the right border $P_v^R$) into a sequence of (singleton or open) 
segments $P_{v,i}^L=\bigsettc{(x_v^L,y)}{y\in Y'_{v,i}}$ (resp., $P_{v,i}^R=\bigsettc{(x_v^R,y)}{y\in Y'_{v,i}}$), with 
$1\le i\le h_v$.
On the basis of the partition $P_{v,1}^L,...,P_{v,h_v}^L$ of $P_v^L$ and the partition $P_{v,1}^R,...,P_{v,h_v}^R$ 
of $P_v^R$, we define a (possibly non-maximal) stripe expression $E_v=(L_v,R_v)$ of length $\len{E_v} = h_v$ by 
specifying the components $L_v(i)$ and $R_v(i)$ of each matched pair. Let $1\le i\le h_v$ be a position of $E_v$. 
If both segments $P_{v,i}^L$ and $P_{v,i}^R$ are singletons of the form $\{p_{v,i}^L\}$ and $\{p_{v,i}^R\}$, 
respectively, then we let $L_v(i)$ be the atom $\type_\cP(p_{v,i}^L)$ and $R_v(i)$ be the atom $\type_\cP(p_{v,i}^R)$. 
Otherwise, if $P_{v,i}^L$ and $P_{v,i}^R$ are open segments, then we let $L_v(i)$ be the cluster 
$\type_\cP(P_{v,i}^L)$ and $R_v(i)$ be the cluster $\type_\cP(P_{v,i}^L)$. 

We observe that the above-defined stripe expression $E_v$ is not maximal with respect to the partial order $\unlhd$
introduced in Subsection \ref{subsec:profiles}. As stripe expressions of decomposition trees are required to be 
maximal, we cannot directly label $v$ with $E_v$ in our decomposition tree. However, if the stripe expression 
$E_v$ is known to be faithful, then we can label $v$ with a maximal (faithful) stripe expression $E(v)$ that 
dominates $E_v$. Unfortunately, it is not clear from the above constructions if the stripe expression 
$E_v$ is faithful.
We shall prove that this is actually the case later.
For the moment, the reader can simply assume that the stripe expression $E(v)$ associated with 
vertex $v$ is undefined when $E_v$ is not faithful.

It remains to specify the coordinate $x_v^M$ along which we divide the current stripe $[x_v^L,x_v^R]\times\bbQ$.
We choose such a coordinate $x_v^M$ by looking at the parity of the vertex $v$. Precisely, if $v$ has parity $0$, 
then we define $x_v^M$ to be the first coordinate, according to the order given by the fixed enumeration $\theta$ 
of $X$, that is strictly between $x_v^L$ and $x_v^R$. Intuitively, this choice will guarantee that every coordinate 
$x\in X$ is eventually identified with either $x_u^L$ or $x_u^R$, for some vertex $u$ in $T$. Otherwise, if $v$ has 
parity $(\lambda,\alpha,\Carg{d})$, then we let $I$ be the set of all positions $1\le i\le\len{E(v)}$ such that 
$\alpha\in\Darg{d}\req\big(E(v)[\lambda](i)\big)$, $\alpha\nin\Darg{d}\req\big(E(v)[\lambda'](i)\big)$, and
$\alpha\nin\Darg{d}\obs\big(E(v)[\lambda'](i)\big)$, where $\lambda'$ is either $R$ or $L$ depending on whether 
$\lambda=L$ or $\lambda=R$. Depending on whether $\Carg{d}$ is downward-oriented or upward-oriented 
(i.e., whether $\Carg{d}\in\big\{\CLRp,\CLLp\big\}$ or $\Carg{d}\in\big\{\CURp,\CULp\big\}$), we let $i$ 
be either the least or the greatest position in $I$ (if $I$ is empty, then the choice of the coordinate $x_v^M$ 
is irrelevant, provided that it is strictly between $x_v^L$ and $x_v^R$). We then choose arbitrarily a point 
$p\in P_{v,i}^\lambda$ and a point $q$ such that $p \Carg{d} q$ and $\alpha\in\Darg{d}\obs\big(\type_\cP(q)\big)$ and 
we force 
$x_v^M$ to be the $x$-coordinate of $q$. 
Note that since $\alpha$ is neither in $\Darg{d}\req\big(E(v)[\lambda'](i)\big)$ nor in $\Darg{d}\obs\big(E(v)[\lambda'](i)\big)$, 
the coordinate $x_v^M$ is strictly between $x_v^L$ and $x_v^R$. 
Accordingly, if $v_1=\prj{1}{v}$ and $v_2=\prj{2}{v}$ are the left and right successors of the vertex $v$ in $T$, 
then we let $x_{v_1}^L=x_v^L$, $x_{v_2}^R=x_v^R$, and $x_{v_1}^R=x_{v_2}^L=x_v^M$. Finally, we inductively apply 
the above construction to the successors $v_1$ and $v_2$ of $v$. 

It is worth pointing out that the stripe expression $E_v$ is decomposed into a left stripe expression 
$E_{v_1}$ and a right stripe expression $E_{v_2}$ in such a way that the matching conditions M1--M3
of Definition \ref{def:decompositions} 
are satisfied. Given that $v$ has parity $(\lambda,\alpha,\Carg{d})$, it can be easily checked that 
the formula $\alpha$ is locally fulfilled as a $\Darg{d}$-request at vertex $v$. 
Analogous properties hold also when we replace each stripe expression $E_v$ 
with the maximal dominating one $E(v)$. What remains to be shown is that
\begin{enumerate}[label=\roman*)]   \item all stripe expressions $E_v$ are faithful (possibly non-maximal),
  \item all types of points of the labelled stripe $\cP$ appear as atoms in some stripe expression $E_v$ 
        (hence they also appear in the maximal stripe expression that dominates $E_v$), 
  \item the decomposition tree $\cT=\ang{V,E,\prj{1}{},\prj{2}{}}$, obtained from $T$ by labelling each 
        vertex $v$ with the maximal stripe expression $E(v)$ that dominates $E_v$, is globally fulfilled.
\end{enumerate}

\smallskip
\noindent
{\bf All stripe expressions are faithful.} \
We fix a vertex $v$ of $\cT$ and we prove that the stripe expression $E_v$ satisfies 
Conditions C1--C5 of Definition \ref{def:faithful}. 
We do this by exploiting the view-to-type dependency and the fact that the atoms (resp., clusters) in the two
profiles $L_v$ and $R_v$ of $E_v$ arise from the types (resp., shadings) of the singleton (resp., open) segments $P_{v,i}^L$ and $P_{v,i}^R$. 
As for Condition C1, we consider two atoms $A$ and $B$ that appear along the same profile of $E_v$ at positions $i$ and $j$, respectively, with $1\le i<j\le\len{E_v}$. 
Let $A=L_v(i)$ and $B=L_v(j)$ (the cases where $L_v(i)$ and/or $L_v(j)$ are clusters or $A$ and $B$ lie along the right profile $R_v$ are similar and thus omitted). 
By construction,
the corresponding segments $P_{v,i}^L$ and $P_{v,j}^L$ are singletons whose points $p\in P_{v,i}^L$ 
and $q\in P_{v,j}^L$ satisfy $p \CTstrict q$. From the view-to-type dependency, we conclude that $\type_\cP(p) \,\CTstrict\, \type_\cP(q)$, whence $L_v(i) \,\CTstrict\, L_v(j)$. Similar arguments can be used to prove Conditions C2 and C3. 
As for the last two conditions, we consider a request $\alpha$ of an atom $L_v(i)$ along the direction $\CTstrict$ 
(the cases of requests of atoms/clusters of left/right profiles along directions $\CTstrict$ and $\CBstrict$ are all similar).
By construction, 
the segment $P_{v,i}^L$ consists of a single point $p$. Moreover, since $\DTstrict\alpha\in\type_\cP(p)$, there is a point $q$ such that $p \,\CTstrict\, q$ and $\alpha\in\type_\cP(q)$. Again by construction, 
there is a segment $P_{v,j}^L$, with $j>i$, that contains the point $q$. We thus conclude that 
$\alpha$ is an observable of $L_v(j)$ along the direction $\CTstrict$.

\smallskip
\noindent
{\bf All types appear in stripe expressions.} \
Let $p=(x,y)$ be a geitemizeneric point in the labelled stripe $\cP$ and 
let $\pi$ be the infinite path of the infinite binary tree $T$ such that 
$x\in[x_v^L,x_v^R]$ for all vertices $v$ along $\pi$ (note that such an 
infinite path $\pi$ exists since $x$ belongs to the first interval 
$X=[x_{v_0}^L,x_{v_0}^R]$ associated with the root $v_0$ and 
$[x_v^L,x_v^R] = [x_{\prj{\!1\;}{\!v\!}}^L,x_{\prj{\!1\;}{\!v\!}}^R] \cup 
[x_{\prj{\!2\;}{\!v\!}}^L,x_{\prj{\!2\;}{\!v\!}}^R]$ for all vertices $v$).
Since $x\in X$ and $\theta$ is an enumeration of $X$, 
there is a natural number $n$ such that $\theta(n)=x$. 
Moreover, since $\pi$ contains infinitely many vertices with parity 
$0$, there must be one such vertex $v$ satisfying $x=x_v^L$, $x=x_v^R$, or $x=x_v^M$ 
($=x_{\prj{\!1\;}{\!v\!}}^R=x_{\prj{\!2\;}{\!v\!}}^L$). 
Hence, the type of the point $p$ appears as an atom in one of the stripe 
expressions $E_v$, $E_{\prj{\!1\;}{\!v\!}}$, or $E_{\prj{\!2\;}{\!v\!}}$ 
that are associated with the vertex $v$, its left-successor $\prj{1}{v}$, or 
its right-successor $\prj{2}{v}$. 

\smallskip
\noindent
{\bf The decomposition tree is globally fulfilled.} \
We conclude by showing that the decomposition tree $\cT$, that results from $T$ by labelling each vertex $v$ 
with the maximal stripe expression $E(v)$ that dominates $E_v$, is globally fulfilled. By construction, the 
root $v_0$ of $\cT$ satisfies $\Darg{d}\req\big(E(v_0)[R](i)\big)=\emptyset$ (resp., $\Darg{d}\req\big(E(v_0)[L](i)\big)=\emptyset$) 
for all positions $1\le i\le\len{E(v_0)}$ and all spatial relations $\Carg{d}\in\big\{\CURp,\CLRp\big\}$ 
(resp., $\Carg{d}\in\big\{\CULp,\CLLp\big\}$). This proves Condition G1 of Definition 
\ref{def:globallyfulfilled}. As for Condition G2, we consider a formula $\alpha\in\closure(\varphi)$, 
a spatial relation $\Carg{d}\in\big\{\CULp,\CURp,\CLLp,\CLRp\big\}$, and an infinite path $\pi$ in $\cT$. 
We let $\lambda$ be either $L$ or $R$ depending on whether $\Carg{d}$ is right-oriented or left-oriented. 
For every $n\in\bbN$, we can find a vertex $v_n$ along $\pi$ that is at distance at least $n$ from the root 
and that has parity exactly $(\lambda,\alpha,\Carg{d})$. Thus, we know from the previous arguments that 
there exist infinitely many vertices $v$ along $\pi$ where $\alpha$ is locally fulfilled as a $\Darg{d}$-request. 
This shows that $\cT$ is a globally fulfilled decomposition tree. 
\end{proof}

\begin{thm}[soundness]\label{prop:soundness}
For every globally fulfilled decomposition tree $\cT=\ang{V,E,\prj{1}{},\prj{2}{}}$, there is a 
labelled stripe $\cP=\ang{X\times\bbQ,(R_a)_{a\in\Sigma}}$ such that
$$
  \type_{\cP}\big(X\times\bbQ\big) 
  \quad=
  \bigcup\limits_{\begin{smallmatrix}v\in V\\1\le i\le\len{E(v)}\end{smallmatrix}}\Bigl(~E(v)[L](i) ~\cup~ E(v)[R](i)~\Bigr).
$$
\end{thm}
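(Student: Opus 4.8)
The plan is to build the labelled stripe $\cP$ by reading off, along each vertical line, the profile that the decomposition tree prescribes, and then verifying that the resulting labelling has exactly the shading claimed. Concretely, I would first exploit the remark following Definition~\ref{def:decompositiontree}: the matching conditions M1--M3 guarantee that, for every vertex $v$ and its two children, the right profile of $\prj{1}{v}$ is equivalent to the left profile of $\prj{2}{v}$, and more generally that right-adjacent vertices carry equivalent profiles on the shared border. Since the construction in Theorem~\ref{prop:completeness} shows how midpoints $x_v^M$ subdivide the initial interval $X=[x_0,x_1]$, I would reverse this: fix an enumeration of the leaves/branches of the binary tree and assign to each branch $\pi$ a rational $x_\pi\in X$ in such a way that the branches are dense in $X$ and the assignment respects the left-to-right order of the tree (a standard Cantor-style back-and-forth embedding of the infinite binary tree's branch space into $\bbQ\cap[x_0,x_1]$). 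For each such $x$, the vertices $v$ along the corresponding branch have nested intervals $[x_v^L,x_v^R]$ whose right (or left) profiles stabilise, in the equivalence class sense, to a well-defined profile $S_x$; this $S_x$ dictates, via its alternating sequence of clusters and atoms, how to partition the vertical line $\{x\}\times\bbQ$ into open segments and singleton points and which types to place there. Doing this coherently for all $x$ simultaneously yields the candidate labelling $\ang{X\times\bbQ,(R_a)_{a\in\Sigma}}$.

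The bulk of the argument is then a verification that $\type_\cP(p)$ is, at every point $p=(x,y)$, exactly the atom that the profile $S_x$ assigns to $p$. This is proved by induction on the structure of formulas in $\closure(\varphi)$, the only nontrivial cases being the modal ones $\Darg{d}\alpha$. For the vertical modalities $\DTstrict,\DBstrict$ one uses Conditions C4--C5 of Definition~\ref{def:faithful}, which were precisely designed so that every $\CTstrict$- or $\CBstrict$-request in a profile is matched by an observable higher up (resp.\ lower down) in the same profile; density of $\bbQ$ and the open/singleton segment structure then turn ``observable in the profile'' into ``a genuine witnessing point exists''. For the four cone directions $\CULp,\CURp,\CLLp,\CLRp$ one uses the faithfulness conditions C1--C3 to propagate requests horizontally between the two borders of each stripe expression, together with the global fulfilment conditions G1--G2: G2 ensures that along every branch each request is \emph{locally fulfilled} (Definition~\ref{def:fulfillment}) infinitely often, and tracing the correspondence of matched pairs (F4) down through the decomposition tree produces, after finitely many steps, either an observable in a sibling profile (F3) or a request pushed into the opposite border that C3 then resolves. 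One must also check the ``no spurious requests escape'' direction: G1 kills all outward-pointing requests at the root, so no request ever needs a witness outside $X\times\bbQ$, and the view-to-type dependency guarantees that the atoms placed are mutually consistent so that no \emph{unwanted} modal formula becomes true. Finally, the displayed equality of shadings follows: $\subseteq$ holds because every point's type is, by construction, one of the atoms listed in some $E(v)[L](i)$ or $E(v)[R](i)$; $\supseteq$ holds because the construction actually realises every such atom as the type of some point (each cluster entry populates an open segment whose points collectively realise all atoms in the cluster, using maximality of the stripe expressions, and each atom entry is realised at a singleton point).

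The main obstacle I anticipate is the simultaneous-coherence issue in defining $\cP$: the profiles $S_x$ and $S_{x'}$ for nearby $x<x'$ must be compatible not just pairwise but in a way that makes the horizontal modalities behave correctly across the continuum of intermediate lines, and the open segments along different vertical lines must ``line up'' into genuinely open two-dimensional regions. The equivalence relation of Definition~\ref{def:equivalence} is the right tool here, but one has to be careful that refining a profile when descending the tree never destroys a previously witnessed fulfilment, and that the limit profile $S_x$ along a branch is well-defined independently of how the branch was reached. I expect this to require a careful bookkeeping lemma stating that, along any branch, the matched pairs of clusters eventually stabilise (which the excerpt already essentially proves: there are at most linearly many distinct matched pairs of clusters, and C1 forces them to be linearly ordered by the request-containment order, so no infinite strictly-descending chain is possible), and that atoms can only be ``absorbed'' into surrounding clusters, never created anew, as one goes down. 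Once this stabilisation is in hand, the fulfilment of horizontal requests reduces to a finite search down the tree, guided by G2, and the rest is the routine modal induction sketched above.
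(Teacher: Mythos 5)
There are genuine gaps, and they sit exactly where the real work of this theorem lies. First, the construction of the stripe itself: you propose to assign rationals to ``branches'' and read off a \emph{limit profile} $S_x$ that ``stabilises in the equivalence class sense'', supported by a bookkeeping lemma saying that matched pairs of clusters stabilise along a branch and that ``atoms can only be absorbed into surrounding clusters, never created anew, as one goes down''. That last claim is false, and it is load-bearing: Definition~\ref{def:equivalence} allows a child profile to single out a \emph{new} atom $A$ between two copies of a cluster $C$ with $A\in C$, and this is precisely the mechanism (cf.\ F4) by which decompositions fulfil requests as one descends, so profiles along a branch need not stabilise at all (also note the branch space is uncountable, so only the countably many border coordinates can carry vertical lines). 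The paper instead builds the vertical partition globally: it assigns to every vertex $v$ and position $i$ a convex set $Y_{v,i}\subseteq\bbQ$ by a breadth-first induction (dyadic singletons for atom pairs, open intervals for cluster pairs, intersections dictated by the correspondence of matched pairs), labels primary points by atoms and secondary points by a dense \emph{shuffle} of the cluster's letter-sets, and--crucially--enforces an extra ``fingerprint'' invariant tied to an enumeration of tuples of closed intervals. Your plan has no substitute for this coherence machinery, and the obstacle you flag as ``simultaneous coherence'' is not resolved by the stabilisation lemma you propose.

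Second, the verification of the cone modalities is where your sketch breaks down. The induction statement ``$\type_\cP(p)$ is exactly the atom the profile assigns to $p$'' is not the right invariant: the paper's Claim is relativised to a set $F$ closed under subformulas, is two-sided for clusters, and--essentially--is quantified over \emph{arbitrary open intervals} $Y$ meeting the segment, which is what lets one localise witnesses (already needed for $\DTstrict$) and, for $\DURp$, produce \emph{infinitely many} candidate points $p'_1,p'_2,\dots$ agreeing with the atom on $F'$; one then shows the requests of $A$ are realised at all but finitely many of them, because a single candidate may lie above the witness eventually produced. Your sketch has no analogue of this ``all but finitely many'' device. Moreover, the sentence ``either an observable in a sibling profile (F3) or a request pushed into the opposite border that C3 then resolves'' misreads the fulfilment conditions: a request sitting in the right profile is case F2, which is mere deferral, and C3 only \emph{propagates} requests, it never fulfils them. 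The paper has to construct a specific descending path along which $\beta\in\DURp\req$ of the left profile but $\beta\nin\DURp\req$ of the right profile at the \emph{interesting position} (seeded by G1 at the root and maintained through the middle border), so that G2 forces F3 or F4 there; and in case F4 it needs the fingerprint invariant to find a vertex whose segment $Y_{v^\downarrow,i^\downarrow}$ actually contains infinitely many candidate $y$-coordinates, so that the observable found at $j^\downarrow\ge i^\downarrow$ lies above all but finitely many candidates. Without the interesting-position/fingerprint alignment between the combinatorial fulfilment in $\cT$ and the geometry of the constructed stripe, the witness produced by G2 may simply sit below the point you are trying to satisfy, and your argument does not close.
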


\begin{proof}
Let $\cT=\ang{V,E,\prj{1}{},\prj{2}{}}$ be a globally fulfilled decomposition tree. As a first step, we associate 
with each vertex $v$ of $\cT$ two coordinates $x_v^L,x_v^R\in\bbQ$ as follows. 
If $v$ is the root of $\cT$, then we let $x_v^L=0$ and $x_v^R=1$. If $v$ is a vertex of $\cT$ and $v_1=\prj{1}{v}$ 
and $v_2=\prj{2}{v}$ are its left and right successors, then, assuming that both values $x_v^L$ and $x_v^R$ are 
defined, we let $x_{v_1}^L=x_v^L$, $x_{v_2}^R=x_v^R$, and $x_{v_1}^R=x_{v_2}^L=\frac{x_v^L+x_v^R}{2}$. 
We collect all these values into a set $X\subseteq\bbQ$:
$$
  X ~=~ \bigsettc{x_v^L}{v\in V} \cup \bigsettc{x_v^R}{v\in V} ~=~ \bigsettc{\tfrac{i}{2^n}}{i,n\in\bbN,~0\le i\le 2^n}.
$$ 
Note that $X$ is strictly included in the interval $[0,1]$ of $\bbQ$ and 
it has minimum and maximum elements. However, since all countable dense 
linear orders with minimal and maximal elements are isomorphic, we can give $X$ the status 
of a closed interval of the rational numbers. By the same abuse of terminology, we call 
the structure $X\times\bbQ$ a stripe and, for any $x<x'\in X$, we denote by $[x,x']$ the 
set of all points $x''\in X$ such that $x\le x''\le x'$. 

The next step consists of dividing the left and right borders of each (sub)stripe $[x_v^L,x_v^R]\times\bbQ$ vertically
on the basis of the stripe expression $E(v)$ and the matching relations with the successor stripe expressions. 
For technical reasons, we will make use of the subset of dyadic rationals to mark the 
endpoints of some vertical segments.
A \emph{dyadic rational} is a rational number of the form $\tfrac{i}{2^n}$, 
for some $i\in\bbZ$ and $n\in\bbN$. It can be easily checked that
dyadic rationals are densely interleaved with non-dyadic ones. 
We will associate with each vertex $v$ of $\cT$ and each position $1\le i\le\len{E(v)}$ a convex subset $Y_{v,i}$ 
of $\bbQ$ in such a way that the following conditions are satisfied:
\begin{enumerate}
  \item $Y_{v,1}<Y_{v,2}<...<Y_{v,\len{E(v)}}$;
  \item $\bigcup_{1\le i\le\len{E(v)}}Y_{v,i}=\bbQ$;
  \item if $E(v)[L](i)$ and $E(v)[R](i)$ are atoms, then $Y_{v,i}$ is a singleton whose unique element is a dyadic rational;
  \item if $E(v)[L](i)$ and $E(v)[R](i)$ are clusters, then $Y_{v,i}$ is an open interval of rational numbers;
  \item for all vertices $v$, all successors $v'$ of $v$, and all positions $1\le i\le\len{E(v)}$ and $1\le i'\le\len{E(v')}$,
        $Y_{v',i'}$ intersects $Y_{v,i}$ if and only if the $i$-th matched pair of $E(v)$ corresponds to the $i'$-th matched pair 	of $E(v')$ 
        under the decomposition induced by $\cT$.
\end{enumerate}
The above sets $Y_{v,i}$ can be built by exploiting a simple induction based on the 
breadth-first traversal of the vertices of $\cT$. We omit the formal construction of the sets $Y_{v,i}$, which is tedious and not interesting, and we only remark that, in order to enforce the above properties, one needs to exploit the density of 
dyadic and non-dyadic rational numbers. 
During the inductive steps that define the sets $Y_{v,i}$, we can enforce an additional invariant, 
that will be explained a few paragraphs below and that will only be used towards the end of the proof.

Let $L$ be the maximal length of a stripe expression. We fix, once and for all, an enumeration 
$$
  \theta:~\bbN\then\big(\{\emptyset\}~\cup~\bbQ\times\bbQ\big)^L
$$
of all possible $L$-tuples $(\tilde{Y}_1,...,\tilde{Y}_L)$ of (possibly empty) {\sl closed intervals} of $\bbQ$ (the reason 
for considering closed intervals, instead of generic ones, is that there are uncountably many open intervals in $\bbQ$).

Let us focus on the induction step during which the sets $Y_{v,1}$, $...$, $Y_{v,\len{E(v)}}$ are associated 
with a certain vertex $v$.
We say that a tuple $\theta(m)=(\tilde{Y}_1,...,\tilde{Y}_L)$ is \emph{compatible with the decomposition at vertex $v$} 
if, given the choices of the sets $Y_{v',1}$, $...$, $Y_{v',\len{E(v')}}$ for all vertices $v'$ that precede $v$ in 
the breadth-first traversal of $\cT$, it is possible to choose the sets $Y_{v,1},...,Y_{v,\len{E(v)}}$ without violating 
the above constraints and in such a way that the containments $Y_{v,i}\supseteq\tilde{Y}_i$ are satisfied for all positions 
$1\le i\le\len{E(v)}$. In order to properly choose the sets $Y_{v,i}$, we mark the vertex $v$ with the {\sl first} 
natural number $m_v$ such that (i) $\theta(m_v)$ is compatible with the decomposition at vertex $v$ and (ii)
$m_v$ does not already mark a proper ancestor $v'$ of $v$ such that $E(v')=E(v)$ (note that such a number $m_v$ 
exists and is unique). The number $m_v$ is called the \emph{fingerprint} of $v$. 
The sets $Y_{v,i}$  are chosen in such a way that they satisfy conditions (1)$-$(5) above and the following additional invariant: 

\medskip\noindent
{\bf Additional invariant.}
{\it If $m_v$ is the fingerprint of $v$ and $\theta(m_v)=(\tilde{Y}_1,...\tilde{Y}_L)$, then $Y_{v,i}\supseteq\tilde{Y}_i$ for all $1\le i\le\len{E(v)}$.}

\medskip
Now, 
we associate with every vertex $v$ and position $1\le i\le\len{E(v)}$ the two (singleton or open) vertical segments $P_{v,i}^L=\{x_v^L\}\times Y_{v,i}$ and $P_{v,i}^R=\{x_v^R\}\times Y_{v,i}$.
Clearly, the union of these segments cover the entire stripe $X\times\bbQ$:
$$
  \bigcup_{\begin{smallmatrix} v\in\cT \\ 1\le i\le\len{E(v)}\end{smallmatrix}} \!\!P_{v,i}^L \cup P_{v,i}^R ~~=~~ X\times\bbQ.
$$
The last step of the construction consists of defining a labelling $(R_a)_{a\in\Sigma}$ of the stripe $X\times\bbQ$ whose induced shading coincides with the set of all atoms of the stripe expressions of $\cT$. 
To this end, for each letter $a\in\Sigma$ and point $p\in X\times\bbQ$, 
we specify whether or not $p$ belongs to the subregion $R_a$. 
We first consider those points $p$ that belong to one or more {\sl singleton} segments $P_{v,i}^\lambda$, with 
$\lambda\in\{L,R\}$ (we call these points \emph{primary}). Given a primary point $p$, we choose arbitrarily some 
vertex $v$ of $\cT$, some position $1\le i\le\len{E(v)}$, and some direction $\lambda\in\{L,R\}$ such that 
$P_{v,i}^\lambda=\{p\}$.
$E(v)[\lambda](i)$ is necessarily an atom, and we accordingly let $p\in R_a$ if and only if the proposition letter $a$ occurs positively in $E(v)[\lambda](i)$. This defines the labelling of primary points.
To specify the labelling of those points that are only covered by open segments (\emph{secondary} points),  a slightly more complex construction is needed, which is based on the notion of ``shuffle''. 
More precisely, for each non-empty set $C\subseteq\sP(\Sigma)$, we fix a function $\eta_C:\bbQ\then C$ such that 
for all $y, y'\in\bbQ$, with $y<y'$, and all $c\in C$, there is a {\sl non-dyadic} rational $y''\in\bbQ$ satisfying $y<y''<y'$ 
and $\eta(y'')=c$ (we call this function the \emph{shuffle} of $C$). A crucial feature of the notion of 
shuffle is that if one removes some  (possibly all) dyadic rationals from the labelled linear order $\eta_C$, he obtains 
a labelled linear order which is isomorphic to $\eta_C$ itself, and, symmetrically, if one inserts some isolated positions 
in $\eta_C$ labelled by elements of $C$, he obtains again a labelling isomorphic to $\eta_C$. 
Now, for each secondary point $p=(x,y)$, we choose arbitrarily some vertex $v$ of $\cT$, some position $1\le i\le\len{E(v)}$, and some direction $\lambda\in\{L,R\}$ such that $p\in P_{v,i}^\lambda$.
$E(v)[\lambda](i)$ is a cluster, and we accordingly let $p\in R_a$ 
if and only f $a\in \eta_C(y)$, where $C=\bigsettc{A\cap\Sigma}{A\in E(v)[\lambda](i)}$. 

In view of the above definitions, one may think of the set of proposition letters 
associated with a certain point $p$ as dependent on the particular choice of the arguments $v,i$, and $\lambda$ such that $p\in P_{v,i}^\lambda$. This is actually not the case. To prove it, one can exploit the matching conditions M1--M3
of Definition \ref{def:decompositions} 
and a simple induction to verify the following claims:
\begin{enumerate}[label=\roman*)]   \item if two singleton segments $P_{v,i}^\lambda$ and $P_{v',i'}^{\lambda'}$ cover the same (primary) point $p$, then we have
        $E(v)[\lambda](i) \,=\, E(v')[\lambda'](i')$ and hence $p$ is labelled by $a$ if and only if $a\in E(v)[\lambda](i)$, if and 	only if $a\in E(v')[\lambda'](i')$;
  \item if a singleton segment $P_{v,i}^\lambda$ and an open segment $P_{v',i'}^{\lambda'}$ cover the same (primary) 		point $p$, then $E(v)[\lambda](i) \,\in\, E(v')[\lambda'](i')$ and hence there is an atom $A\in E(v')[\lambda'](i')$ that 		contains exactly the labels of the point $p$ and possibly other more complex subformulas;  
  \item if $P_{v,i}^\lambda$ and $P_{v',i'}^{\lambda'}$ are two overlapping open segments, then 
  	$E(v)[\lambda](i) \,=\, E(v')[\lambda'](i')$ and thus the labelling of the secondary points in 
  	$P_{v,i}^\lambda \cap P_{v',i'}^{\lambda'}$ (naturally ordered from bottom to top) is isomorphic 
  	to the shuffle $\eta_C$, with $C=\bigsettc{A\cap\Sigma}{A\in E(v)[\lambda](i)}$
        or, equivalently, $C=\bigsettc{A\cap\Sigma}{A\in E(v')[\lambda'](i')}$;
  \item if $P_{v,i}^\lambda$ is an open segment, then the primary points inside $P_{v,i}^\lambda$ have dyadic $y$-			coordinates and thus they must be interleaved by secondary points; together with the previous claim, this
  	implies that the labelling of $P_{v,i}^\lambda$ is isomorphic to the shuffle $\eta_C$, with 
  	$C=\bigsettc{A\cap\Sigma}{A\in E(v)[\lambda](i)}$.
\end{enumerate}
What remains to do is to show that the shading of the labelled stripe $\cP=\ang{X\times\bbQ,(R_a)_{a\in\Sigma}}$ 
coincides with the set of atoms that appear in the stripe expressions of the decomposition tree $\cT$. We prove this
by an induction based on increasing sets of formulas closed under 
subformulas, that is, we consider sets $F$ that contain all subformulas $\beta$ of $\alpha$ whenever $\alpha\in F$. 
The rest of the proof is devoted to show the following statement 
(for $F=\closure(\varphi)$, it leads to the desired conclusion).

\begin{clm}
Let $F$ be a set of formulas closed under 
subformulas. For all vertices $v\in V$, all
positions $1\le i\le\len{E(v)}$, all directions $\lambda\in\{L,R\}$, and all open intervals $Y\subseteq\bbQ$ 
such that $P_{v,i}^\lambda \cap (X\times Y)\neq\emptyset$,
\begin{itemize}
  \item if $A=E(v)[\lambda](i)$ is an atom, then the unique point $p\in P_{v,i}^\lambda \cap (X\times Y)$ 
        satisfies $\type_\cP(p)\cap F=A\cap F$;
  \item if $E(v)[\lambda](i)$ is a cluster, then, for every atom $A\in E(v)[\lambda](i)$, there is
        a point $p\in P_{v,i}^\lambda \cap (X\times Y)$ (and, vice versa, for every point $p\in P_{v,i}^\lambda$,
        there is an atom $A\in E(v)[\lambda](i)$) such that $\type_\cP(p)\cap F=A\cap F$.
\end{itemize}
\end{clm}

\medskip\noindent
We fix a set $F$ of formulas closed under 
subformulas, a vertex $v\in V$, a position $1\le i\le\len{E(v)}$, 
and a direction $\lambda=L$ (the case $\lambda=R$ is symmetric) such that $E(v)[\lambda](i)$ is a cluster 
(the case where $E(v)[\lambda](i)$ is an atom is similar). We also fix an open interval $Y\subseteq\bbQ$ 
such that $P_{v,i}^\lambda\cap(X\times Y)\neq\emptyset$. We prove the above claim by exploiting an induction 
on the size of 
$F$.
\begin{enumerate}[label=\roman*)]   \item Base case: $F=\Sigma$. This case is trivial as, from previous arguments, we know that the labelling of the 
        open segment $P_{v,i}^\lambda$ is isomorphic to the shuffle $\eta_C$, where 
        $C=\bigsettc{A\cap\Sigma}{A\in E(v)[\lambda](i)}$.

  \item Inductive case: $F=F'\uplus\settc{\neg\beta}{\beta\in F'}\uplus\settc{\beta_1\vel\beta_2}{\beta_1,\beta_2\in F'}$,
        where $F'$ is a set of formulas closed under 
	subformulas. As in the previous case, the claim trivially follows 
        from the inductive hypothesis on $F'$ and from the definition of atom.

  \item Inductive case: $F=F'\uplus\settc{\DTstrict\beta}{\beta\in F'}$, where $F'$ is a set of formulas closed under 
	subformulas (the case $F=F'\uplus\settc{\DBstrict\beta}{\beta\in F'}$ is symmetric).
        We fix an atom $A\in E(v)[\lambda](i)$ and we prove that there is a point $p\in P_{v,i}^\lambda\cap(X\times Y)$
        satisfying $\type_\cP(p)\cap F=A\cap F$ (using similar arguments one can show that, for every point 
        $p\in P_{v,i}^\lambda$, there is an atom $A\in E(v)[\lambda](i)$ satisfying $\type_\cP(p)\cap F=A\cap F$). 
        
        We start by observing that, thanks to the inductive hypothesis on $F'$, there is a point 
        $p\in P_{v,i}^\lambda\cap(X\times Y)$ such that $\type_\cP(p)\cap F'=A\cap F'$. It is now sufficient to 
        show that $\alpha\in\type_\cP(p)$ if and only if $\alpha\in A$ for all formulas $\alpha\in F\setminus F'$.
        
        Let $\alpha=\DTstrict\beta\in F\setminus F'$, with $\beta\in F'$, and suppose that $\alpha\in\type_\cP(p)$. 
        By definition of type, there is a point $q$ such that $p \,\CTstrict\, q$ and $\beta\in\type_\cP(q)$.
        Let $j$ ($\ge i$) be the unique position of $E(v)$ such that $q\in P_{v,j}^\lambda$. By the inductive 
        hypothesis, 
	there is an atom $B$ that either coincides with $E(v)[\lambda](j)$ or belongs to 
        $E(v)[\lambda](j)$, depending on whether $E(v)[\lambda](j)$ is an atom or a cluster, and that contains the 
        subformula $\beta$. Moreover, by Conditions C1 and C2 of Definition \ref{def:faithful}, 
	$A \,\CTstrict\, B$. Since $\beta\in B$, we obtain $\beta\in\DTstrict\req(A)$ and thus $\alpha\in A$. 

        As for the converse implication, suppose that $\alpha=\DTstrict\beta\in A$. Clearly, $\beta\in\DTstrict\req(A)$. 
        Moreover, by Condition C4, there must be a position $j\ge i$ of $E(v)$ and an atom $B$
        that either coincides with $E(v)[L](j)$ or belongs to $E(v)[L](j)$ and that satisfies $\beta\in\DTstrict\obs(B)$. 
        Let $Y'=\bigsettc{y\in\bbQ}{p \,\CTstrict\, (x_v^\lambda,y)}$. We observe that $X\times Y'$ is an open 
        vertical segment that intersects $P_{v,j}^\lambda$. By applying the inductive hypothesis to the vertex $v$, the 
        position $j$, the atom $B$, and the open interval $Y'$, we derive the existence of a point 
        $q\in P_{v,j}^\lambda\cap(X\times Y')$ such that $\beta\in\type_\cP(q)$. Finally, since $p \,\CTstrict\, q$, 
        we conclude that $\beta\in\DTstrict\req(p)$ and thus $\alpha\in\type_\cP(p)$.

  \item Inductive case: $F=F'\uplus\settc{\DURp\beta}{\beta\in F'}$, where $F'$ is a set of formulas closed under 
	subformulas (the cases for the remaining operators $\DULp$, $\DLRp$, $\DLLp$ can be dealt with using
        similar arguments). This is 
	the most interesting and complex case, as it puts together all the pieces of the puzzle that we have introduced so 		
	far, e.g., Definitions \ref{def:dependencies}, \ref{def:faithful}, \ref{def:fulfillment}, and \ref{def:globallyfulfilled}.
        As in the previous case, we fix an atom $A\in E(v)[L](i)$ and we prove that there is a point 
        $p\in P_{v,i}^L\cap(X\times Y)$ satisfying $\type_\cP(p)\cap F=A\cap F$ (the proof of the 
        converse direction, that fixes a point and obtains an atom, is similar). The new ingredient here
        is that we will consider {\sl multiple candidate points} obtained from the inductive hypothesis. 
        Precisely, we partition the open interval $Y_{v,i} \cap Y$ into an infinite sequence $Y'_1>Y'_2>Y'_3>...$ 
        of smaller open intervals (this is possible because the subordering $Y_{v,i} \cap Y$ is isomorphic to $\bbQ$). 
        For each of these intervals $Y'_l$, we apply the inductive hypothesis on $F'$ and we obtain a point 
        $p'_l\in P_{v,i}^L\cap(X\times Y'_l)$ such that, for all $\beta\in F'$, $\beta\in\type_\cP(p'_l)$ 
        if and only if $\beta\in A$. 
	All points $p'_1,p'_2,p'_3,...$ lie along the same open vertical segment 
        $P_{v,i}$, they are naturally ordered from top to bottom, and they get arbitrarily close to the 
        lower endpoint of the segment $P_{v,i}$ (symmetric arrangements of points should be considered
        for the downward-oriented operators $\DLLp$ and $\DLRp$).
        Below, we prove that $\type_\cP(p'_l)\cap F\subseteq A\cap F$ holds for all points $p'_1,p'_2,p'_3,...$. 
        Later on, we will prove that the converse containment holds for all but finitely many such points.

        \smallskip
        Let $\alpha=\DURp\beta\in F\setminus F'$, with $\beta\in F'$, and suppose that $\alpha\in\type_\cP(p'_l)$. 
        By definition of type, there is a point $q$ such that $p'_l \,\CURp\, q$ and $\beta\in\type_\cP(q)$.
        Starting from $v$, we define an ascending sequence of vertices $v_0,...,v_n$, 
        where $v_0=v$, $v_n$ is the root of $\cT$,
        and $v_{k+1}$ is the parent of $v_k$ for all $0\le k<n$. Given $0\le k\le n$, we denote by $i_k$ the unique position 
        of $E(v_k)$ such that the interval $Y_{v_k,i_k}$ contains the $y$-coordinate of the point $p'_l$ (note that $i_0=i$). 
        Clearly, any two intervals $Y_{v_k,i_k}$ and $Y_{v_{k+1},i_{k+1}}$ have non-empty intersection. 
        Therefore, thanks to the constraints enforced at the beginning of the proof, either 
        $E(v_k)[L](i_k) \,\equalin\, E(v_{k+1})[L](i_{k+1})$ or $E(v_k)[R](i_k) \,\equalin\, E(v_{k+1})[R](i_{k+1})$,
        depending on whether $v_k$ is the left successor or the right successor of $v_{k+1}$. 
        From Condition C3 of Definition \ref{def:faithful}, we also know that
        $E(v_k)[L](i_k) \,\CURp\, E(v_k)[R](i_k)$. Putting all together and exploiting the transitivity
        of the relation $\CURp$ over atoms/clusters, we obtain $E(v)[L](i) \,\CURp\, E(v_n)[R](i_n)$.

        By using a similar technique, we define an infinite descending sequence of vertices 
        $v_n,v_{n+1},...$ in such a way that the point $q$ lies always inside the stripe 
        $[x_{v_{k+1}}^L,x_{v_{k+1}}^R]\times\bbQ$, but never along the left border.
        As before, we denote by $i_k$ the unique position of $E(v_k)$ such that the interval $Y_{v_k,i_k}$ 
        contains the $y$-coordinate of the point $p'_l$. This guarantees that either 
        $E(v_k)[L](i_k) \,\equalin\, E(v_{k+1})[L](i_{k+1})$ or $E(v_k)[R](i_k) \,\equalin\, E(v_{k+1})[R](i_{k+1})$ 
        holds, depending on whether $v_{k+1}$ is the left successor or the right successor of $v_k$. 
        We know from Condition C3 that $E(v_k)[L](i_k) \,\CURp\, E(v_k)[R](i_k)$ and hence,
        using again transitivity, $E(v)[L](i) \,\CURp\, E(v_k)[R](i_k)$ for all $k\ge n$. 

        Consider the first vertex $v_k$ in the sequence $v_n,v_{n+1},...$ such that $q$ lies along the right 
        border of the corresponding stripe expression $E(v_k)$, namely, $q\in P_{v_k}^R$ (the existence of such a vertex 
        follows from the definition of the stripe $X\times\bbQ$). 
        Let $j$ be the (unique) position of $E(v_k)$ such that $Y_{v_k,j}$ contains the $y$-coordinate of $q$.
        Clearly, we have $j\ge i_k$. Moreover, from the inductive hypothesis, 
        we have $\beta\in\DURp\obs\big(E(v_k)[R](j)\big)$
        and hence, by Conditions C1 and C2, 
        $E(v_k)[R](i_k) \,\CTstrict\, E(v_k)[R](j)$. 
        Finally, we exploit Definition \ref{def:dependencies}, and in particular the fact that 
        $E(v)[L](i) \,\CURp\, E(v_k)[R](i_k) \,\CTstrict\, E(v_k)[R](j)$ implies $E(v)[L](i) \,\CURp\, E(v_k)[R](j)$,
        to conclude that $\beta\in\DURp\req\big(E(v)[L](i)\big)$. As $\alpha=\DURp\beta$ and $A\in E(v)[L](i)$, 
        this shows that $\alpha\in A$.

        \smallskip
        We now prove that the converse containment $A\cap F\subseteq\type_\cP(p'_l)\cap F$ holds for at least one 
        of the infinitely many points $p'_1,p'_2,p'_3,...$. Let $\alpha=\DURp\beta\in A$ ($\in E(v)[L](i)$). 
        For the sake of brevity, we denote by $Y'$ the set of the $y$-coordinates of all points 
        $p'_1,p'_2,p'_3,...$ (note that $Y'\subseteq Y_{v,i}$). Moreover, given a vertex $v'$ in $\cT$ and a position 
        $1\le i'\le\len{E(v')}$, we say that $i'$ is an \emph{interesting} position of $v'$ if the interval $Y_{v',i'}$ 
        contains infinitely many coordinates from the set $Y'$. 
        Note that every vertex $v'$ has at least one interesting position (this follows from simple counting 
        arguments, since the infinite set $Y'$ is partitioned into finitely many sets of the form $Y_{v',i'} \cap Y'$, 
        with $1\le i'\le\len{E(v')}$). It is also easy to see that there is at most one interesting position $i'$ 
        for each vertex $v'$ (this follows from the fact the set $Y'$ has a unique accumulation point in the 
        completion $\bbR\cup\{-\infty,+\infty\}$ of $\bbQ$).

        Now, we consider the ascending sequence of vertices $v_0,...,v_n$ that starts from $v$ and reaches
        the root of $\cT$, where each $v_{k+1}$ is the parent of $v_k$, for $0\le k<n$.
        Let $i_0,...,i_n$ be the interesting positions of the vertices $v_0,...,v_n$, respectively. 
        By exploiting a simple induction on $k$, we prove that $\beta\in\DURp\req\big(E(v_k)[L](i_k)\big)$ 
        for all $0\le k\le n$. For $k=0$ the claim follows easily since $i_0=i$ is the interesting position 
        of the vertex $v_0=v$. For the inductive step, we assume that the claim holds for $k$ and we prove it
        for $k+1$. We distinguish two cases depending on whether $v_k$ is the left successor or the right
        successor of $v_{k+1}$. In this first case, since the two intervals $Y_{v_k,i_k}$ and $Y_{v_{k+1},i_{k+1}}$ 
        overlap, we know that $E(v_k)[L](i_k) \,\equalin\, E(v_{k+1})[L](i_{k+1})$ and hence we immediately obtain
        $\beta\in\DURp\req\big(E(v_{k+1})[L](i_{k+1})\big)$. In the second case, we consider the left sibling $v'$ 
        of $v_k$ and its interesting position $i'$. As the two intervals $Y_{v_k,i_k}$ and $Y_{v',i'}$ overlap,
        we have $E(v')[R](i') \,\equalin\, E(v_k)[L](i_k)$, whence $\beta\in\DURp\req\big(E(v')[R](i')\big)$. 
        Moreover, Condition C3 implies $E(v')[L](i') \,\CURp\, E(v')[R](i')$, whence 
        $\beta\in\DURp\req\big(E(v')[L](i')\big)$. Finally, as the two intervals $Y_{v',i'}$ and $Y_{v_{k+1},i_{k+1}}$ 
        overlap and $v'$ is the left-successor of $v_{k+1}$, we have $E(v')[L](i') \,\equalin\, E(v_{k+1})[L](i_{k+1})$,
        whence $\beta\in\DURp\req\big(E(v_{k+1})[L](i_{k+1})\big)$.

        Below, we use a similar technique to build an infinite descending sequence of vertices $v_n,v_{n+1}, \ldots$ 
        such that, for all $k\ge n$, the interesting position $i_k$ of $v_k$ satisfies both 
        $\beta\in\DURp\req\big(E(v_k)[L](i_k)\big)$ and $\beta\nin\DURp\req\big(E(v_k)[R](i_k)\big)$.
        As for the base case ($k=n$), it suffices to recall that $v_n$ is the root of $\cT$ 
        and that Condition G1 of Definition \ref{def:globallyfulfilled} implies 
        $\beta\nin\DURp\req\big(E(v_n)[R](i_n)\big)$. 
        As for the inductive step, we assume that $v_k$ is defined and that $i_k$ is its interesting position,
        and we define $v_{k+1}$ as follows. Let $w_1$ and $w_2$ be, respectively, the left 
	and the right successor of $v_k$, and let $j_1$ and $j_2$ be the interesting positions of 
	$w_1$ and $w_2$, respectively.
        Since the intervals $Y_{v_k,i_k}$, $Y_{w_1,j_1}$, and $Y_{w_2,j_2}$ are pairwise 
        overlapping, it holds that
        $E(v_k)[L](i_k) \,\equalin\, E(w_1)[L](j_1)$, $E(v_k)[R](i_k) \,\equalin\, E(w_2)[R](j_2)$, and
        $E(w_1)[R](j_1) \,\equalin\, E(w_2)[L](j_2)$. This implies that $\beta\in\DURp\req\big(E(w_1)[L](j_1)\big)$, 
        $\beta\nin\DURp\req\big(E(w_2)[R](j_2)\big)$ and either $\beta\nin\DURp\req\big(E(w_1)[R](j_1)\big)$ or
        $\beta\in\DURp\req\big(E(w_2)[L](j_2)\big)$. Depending on the latter two cases, we define $v_{k+1}$
        to be either $w_1$ or $w_2$; accordingly, the interesting position $i_{k+1}$ of $v_{k+1}$ is either 
        $j_1$ or $j_2$. 

        Let us consider now the above-defined infinite path  
	$\pi=v_n,v_{n+1}, \ldots$
        From Condition G2 of Definition \ref{def:globallyfulfilled}, we know that $\pi$ contains 
        infinitely many vertices $v_k$ where the formula $\beta$ is locally fulfilled as a $\DURp$-request. 
        By construction, all points $p'_1,p'_2,p'_3,...$ lie either strictly to the left of each stripe 		
        $[x_{v_k}^L,x_{v_k}^R]\times\bbQ$ or along its left border $P_{v_k}^L$. Moreover, since 
        $\beta\in\DURp\req\big(E(v_k)[L](i_k)\big)$ and 
        $\beta\nin$ $\DURp\req\big(E(v_k)[R](i_k)\big)$, we know that, among the $4$ cases envisaged by Definition 
        \ref{def:fulfillment}, only the last two cases (Condition F3 and F4) 
        can be satisfied by each vertex $v_k$ and its interesting 
        position $i_k$. We thus distinguish between two subcases.

        \smallskip\noindent
        {\bf Subcase F3.}\
        If $\pi$ contains a vertex $v_k$ that satisfies Condition F3, then
        we have $\beta\in\DURp\obs\big(E(v_k)[R](j)\big)$ for some position $j$ that is greater than or equal to
        the interesting position $i_k$ of $v_k$. By 
	the inductive hypothesis, 
	there exists a point
        $q\in P_{v_k,j}^R$ such that $\beta\in\DURp\obs\big(\type_\cP(q)\big)$. 
        Moreover, since $j\ge i_k$ and $Y_{v_k,i_k}$
        contains infinitely many elements from $Y'$, we have that the elements of $Y_{v_k,j}$ are strictly greater than all but 
        finitely many elements of $Y'$. In particular, we have that $p'_l \,\CURp\, q$ for all but finitely many points $p'_l$.
        This allows us to conclude that $\alpha=\DURp\beta\in\type_\cP(p'_l)$ for all 
        but finitely many point $p'_l$ in $P_{v,i}^L\cap(X\times Y)$.

        \smallskip\noindent
        {\bf Subcase F4.}\
        If $\pi$ contains infinitely many vertices $v_{k_1},v_{k_2},...$ satisfying Condition F4,
	then, for each index $k_h$, with $h\ge 1$, 
        the stripe expression $E(\prj{1}{v_{k_h}})$ contains two positions $i^\downarrow_h\le j^\downarrow_h$ such that 
        (i) the $i^\downarrow_h$-th matched pair of $E(\prj{1}{v_{k_h}})$ corresponds to the $i_h$-th matched pair of 
        $E(v_{k_h})$, where $i_h$ is the interesting position of $v_{k_h}$, and
        (ii) $\beta\in\DURp\obs\big(E(\prj{1}{v_{k_h}})[R](j^\downarrow_h)\big)$.
        Without loss of generality (e.g., by restricting to a suitable subsequence of vertices), 
        we can assume that all stripe expressions $E(\prj{1}{v_{k_1}})$, $E(\prj{1}{v_{k_2}})$, ... 
        coincide, and hence we can denote them simply by $E^\downarrow$. Similarly, we can assume that all 
        indices $i^\downarrow_1,i^\downarrow_2,...$ (resp., $j^\downarrow_1,j^\downarrow_2,...$) coincide, 
        and hence we can denote them simply by $i^\downarrow$ (resp., $j^\downarrow$).
	Consider now the tuples $(\tilde{Y}_1,...,\tilde{Y}_L)$ of closed intervals of $\bbQ$ such that 
        $\tilde{Y}_{i^\downarrow}$ contains infinitely many $y$-coordinates from the set $Y'$ and 
        $\tilde{Y}_j=\emptyset$ for all other indices $j\in\{1,...,L\}\setminus\{i^\downarrow\}$. 
        We call these tuples $(\tilde{Y}_1,...,\tilde{Y}_L)$ \emph{interesting tuples} and we let $M$
        be the set of indices of all interesting tuples, according to the fixed enumeration $\theta$
        that we introduced at the beginning of the proof.
        We observe that there are infinitely many interesting tuples that are {\sl compatible} with 
        the decompositions at the vertices $\prj{1}{v_{k_1}}$, $\prj{1}{v_{k_2}}$, .... In particular,
        this means that infinitely many indices from $M$ appear as {\sl fingerprints} of vertices along 
        $\pi$ that might be different from $\prj{1}{v_{k_1}}$, $\prj{1}{v_{k_2}}$, ..., but whose stripe  
        expressions coincide with $E^\downarrow$. Let $v^\downarrow$ be any of these vertices.  
        From the construction given at the beginning of this proof, it follows that 
        $Y_{v^\downarrow,i^\downarrow}\supseteq\tilde{Y}_{i^\downarrow}$. 
        In particular, as $\tilde{Y}_{i^\downarrow}$ contains infinitely many $y$-coordinates from the set $Y'$, 
        we have that $i^\downarrow$ is the interesting position of $v^\downarrow$. 
        Since $E(v^\downarrow)=E^\downarrow=E(\prj{1}{v_{k_h}})$ for all $h\ge 1$, and 
        $\beta\in\DURp\obs\big(E^\downarrow[R](j^\downarrow)\big)$ for some $j^\downarrow\ge i^\downarrow$,
        by 
	the inductive hypothesis, 
	there exists a point $q\in P_{v^\downarrow,j^\downarrow}^R$ 
        such that $\beta\in\type_\cP(q)$. To conclude, it suffices to observe that the elements of 				
        $Y_{v^\downarrow,j^\downarrow}$ are greater than all but finitely many elements of $Y'$. 
        This shows that $p'_l \,\CURp\, q$, and thus $\alpha=\DURp\beta\in\type_\cP(p'_l)$ for all 
        but finitely points $p'_l$ in $P_{v,i}^L\cap(X\times Y)$.
\end{enumerate}
This concludes the proof. 
\end{proof}

 \section{Reducing Cone Logic to a fragment of CTL}\label{sec:solution}

In this section, we make use of the tree pseudo-model property of Cone Logic to devise a decision procedure 
for its satisfiability problem. More precisely, thanks to the results shown in Section \ref{sec:pseudomodel}, the problem 
of establishing whether 
a formula $\varphi$ of Cone Logic is satisfiable over the labelled rational plane is reducible 
to the problem of checking the existence of a globally fulfilled decomposition tree $\cT$ that satisfies $\varphi$. 
The effectiveness of such an approach stems from the fact that the properties that characterize 
a globally fulfilled decomposition tree can be expressed in (a proper fragment of) CTL. 
This allows us to immediately reduce the satisfiability problem for Cone Logic to that for CTL, which is known to be in \exptime \cite{ctl,complexity_of_ctl_fragments}. 
From a practical point of view, this is already an interesting result, since there exist a number 
of efficient decision procedures for CTL. However, we will improve it by showing that 
the satisfiability problem for Cone Logic is in \pspace. This is done by further reducing the 
satisfiability problem for the fragment of CTL that captures Cone Logic to the universality 
problem for symbolic representations of non-deterministic B\"uchi automata. In the next 
section, we will see that the \pspace upper bound is actually tight.
 
\begin{thm}\label{th:pspace}
The satisfiability problem for Cone Logic, 
over the class of all labelled rational planes as well as
over the class of all labelled (rational or real) planes, is in \pspace.
\end{thm}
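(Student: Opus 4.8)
The plan is to turn the tree pseudo-model property into an automata-theoretic decision procedure. By Theorems~\ref{prop:completeness} and~\ref{prop:soundness}, a formula $\varphi$ of Cone Logic is satisfiable in some labelled stripe --- equivalently, by Remarks~\ref{rem:rationalplanes} and~\ref{rem:stripes}, in some labelled rational or real plane --- if and only if there is a globally fulfilled decomposition tree that satisfies $\varphi$. I would therefore first reduce the satisfiability problem for Cone Logic to the problem of deciding the \emph{existence} of a globally fulfilled decomposition tree satisfying $\varphi$, and then show that the latter problem is in \pspace.

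For the reduction, recall from Subsection~\ref{subsec:profiles} that one may restrict to maximal faithful stripe expressions with pairwise distinct matched pairs of clusters; each such stripe expression has length linear in $\len{\varphi}$ and admits a succinct description of size polynomial in $\len{\varphi}$, obtained by listing, for every matched pair, its requests and observables. These descriptions form the state space $Q$ of a tree automaton $\mathcal{A}$ over the infinite complete binary tree: $\mathcal{A}$ has exponentially many states, but each state is polynomial-sized, and the following can all be decided in polynomial time directly on the descriptions --- whether a description denotes a faithful stripe expression (Conditions C1--C5 of Definition~\ref{def:faithful}), whether a pair $(E_1,E_2)$ is a decomposition of $E$ (matching Conditions M1--M3 of Definition~\ref{def:decompositions}), whether a formula $\alpha\in\closure(\varphi)$ is locally fulfilled as a $\Darg{d}$-request at a vertex (Conditions F1--F4), whether the root condition G1 holds, and whether a stripe expression contains an atom $A$ with $\varphi\in A$. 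The transitions of $\mathcal{A}$ send a state $E$ to the pairs $(E_1,E_2)$ that form a decomposition of it, its initial states are those satisfying G1 and containing $\varphi$, and its generalized B\"uchi acceptance condition requires, for each of the linearly many pairs $(\alpha,\Carg{d})$ with $\alpha\in\closure(\varphi)$ and $\Carg{d}\in\set{\CURp,\CULp,\CLRp,\CLLp}$, that along every branch infinitely many vertices locally fulfil $\alpha$ as a $\Darg{d}$-request (Condition G2). A globally fulfilled decomposition tree satisfying $\varphi$ is then exactly an accepting run of $\mathcal{A}$. Equivalently, all of these conditions are expressible in the $\set{\AG,\AF,\AX}$-fragment of CTL (using that $\AG\AF\psi$ is equivalent to ``$\psi$ holds infinitely often along every path''), so, since satisfiability of CTL is in \exptime~\cite{ctl,complexity_of_ctl_fragments}, this already yields an \exptime upper bound.

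To sharpen the bound to \pspace I would not build $\mathcal{A}$ explicitly, but decide nonemptiness of $\mathcal{A}$ working directly with its symbolic, polynomial-size state descriptions. Nonemptiness of $\mathcal{A}$ amounts to the existence of a \emph{decomposition strategy} assigning to each reachable stripe expression a decomposition of it so that the generated infinite complete binary tree satisfies Condition G2 on every branch. The key point is that the only acceptance requirement is a generalized B\"uchi condition on \emph{all} branches --- there are no existential eventualities that would force pairwise distinct witnessing subtrees --- so the existence of such a strategy reduces to the \emph{universality problem for a nondeterministic B\"uchi word automaton} $\mathcal{B}$ whose states are again the polynomial-size descriptions in $Q$ and whose transition relation and accepting set are computable in polynomial time on these descriptions. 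Universality of a nondeterministic B\"uchi automaton is in \pspace, and the usual Miyano--Hayashi subset-construction procedure can be run \emph{on the fly} over the symbolic automaton $\mathcal{B}$, keeping at all times only polynomially many polynomial-size descriptions; hence it stays within polynomial space. This establishes the \pspace upper bound for satisfiability over labelled rational planes, and, by Remark~\ref{rem:rationalplanes}, the same bound for satisfiability over labelled real planes and over the class of all labelled planes.

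I expect the main obstacle to be this last step: replacing the exponential-state B\"uchi tree automaton $\mathcal{A}$ by a symbolically represented B\"uchi word automaton whose universality certifies nonemptiness of $\mathcal{A}$, and checking that the resulting Miyano--Hayashi computation really fits in polynomial space. In particular one has to argue carefully that a globally fulfilled decomposition tree can always be chosen uniform enough that its existence is witnessed by a word-automaton run rather than genuinely requiring tree-automaton nonemptiness --- intuitively, that the interplay of the matching Conditions M1--M3 with the fulfilment Conditions F1--F4 and G2 never forces two incompatible decomposition choices to coexist within a single decomposition tree. Once this is in place, the polynomial-space bound follows from the standard complexity of B\"uchi universality applied to the symbolic representation.
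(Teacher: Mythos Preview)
Your plan matches the paper's: reduce satisfiability to the existence of a globally fulfilled decomposition tree (via Theorems~\ref{prop:completeness} and~\ref{prop:soundness} and Remarks~\ref{rem:rationalplanes},~\ref{rem:stripes}), encode such trees using polynomially many fresh proposition letters that record, for each position of each profile, the associated requests and observables, express Conditions C1--C5, M1--M3, F1--F4, G1, G2 as a restricted CTL formula, and finish with a \pspace\ universality test for a symbolically represented nondeterministic B\"uchi word automaton.

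The gap is precisely the step you single out, and your sketch does not close it: one cannot in general reduce nonemptiness of a generalized-B\"uchi \emph{tree} automaton to universality of a B\"uchi \emph{word} automaton by a generic ``uniformisation'' argument --- ``for every branch there is a labelling'' does not automatically give ``there is one labelling good for every branch''. The paper's device is syntactic rather than semantic. It writes the tree constraint as $\tilde\varphi_{\tree}\wedge\tilde\varphi_{\path}$, where $\tilde\varphi_{\path}$ is a conjunction of basic CTL formulas of the shapes $\lambda$, $\AG\lambda$, $\AG\delta$, $\AG\AF\delta$, with $\delta$ containing only positive occurrences of $\AX$ and no other modality; in particular all path quantifiers in $\tilde\varphi_{\path}$ are universal and no $\mathbf G$ is nested under an $\mathbf F$. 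This specific shape is what allows the induced LTL condition on branches to be recognised by a \emph{deterministic} B\"uchi automaton $\cA_{\path}$, built symbolically in polynomial time. Projecting out the guessed stripe-expression labels then yields the nondeterministic $\cA_{\path}^{\exists}$ over $\{0,1,2\}$, and the determinism of $\cA_{\path}$ is the ingredient the paper uses to equate universality of $\cA_{\path}^{\exists}$ over $\{0\}\cdot\{1,2\}^\omega$ with the existence of a single expansion $\tilde T$ all of whose branches are accepted by $\cA_{\path}$. Once you supply this --- the precise CTL shape, the deterministic intermediate automaton, and its projection --- the rest of your outline (symbolic \pspace\ universality via Miyano--Hayashi; the paper phrases it as a Savitch-style argument) goes through unchanged.
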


\begin{proof}
To start with, we recall that in Section \ref{sec:logic} (Remark \ref{rem:rationalplanes} and Remark \ref{rem:stripes})
we show that the satisfiability problem for Cone Logic, interpreted over the class of labelled rational planes 
(and, similarly, 
over the class of labelled, rational or real, planes) is reducible to the 
same problem over the class of labelled rational stripes. 
In the following, we first show how to reduce this latter problem to the satisfiability problem for a suitable fragment of CTL
(this theorem), and then to the universality problem for symbolically represented non-deterministic B\"uchi automata (next section).

The first step of the proof consists of translating, in polynomial time, a given  formula $\varphi$ of Cone Logic
into an equi-satisfiable conjunction $\tilde{\varphi}$ of CTL formulas of the forms: 
$$
  \lambda, \qquad\quad \AG\lambda, \qquad\quad \AG\EX\lambda, \qquad\quad \AG\delta, \qquad \mbox{ or } \quad \AG\AF\delta,
$$
where $\lambda$ and $\delta$
respectively denote a plain propositional formula and 
a CTL formula that uses the modality $\AX\!$ (only in a {\sl positive} way) and 
no other modality. 
Let us call the above conjuncts \emph{basic} CTL formulas. 

In the following, we show how to encode a 
decomposition tree $\cT$ by means of an infinite binary tree $\tilde{\cT}$ with labels only on vertices.
Such an encoding is needed because CTL formulas are not able to distinguish the two successor relations 
of a binary tree. 
First, we introduce three fresh proposition letters $0$, $1$, $2$ and we encode the two successor relations $\prj{1}{}$ 
and $\prj{2}{}$ of $\cT$ by giving each vertex $v$ either label $0$, $1$, or $2$, depending on whether $v$ is the 
root, $v=\prj{1}{u}$, or $v=\prj{2}{u}$, where $u$ is the parent of $v$. 
The resulting tree can be logically defined (up to bisimulation) using a suitable conjunction of basic CTL formulas 
over the signature $\{0,1,2\}$:
$$
  \tilde{\varphi}_{\tree} ~=~ (0 \et \neg 1 \et \neg 2) \,\et \AG\AX\bigl(\neg 0 \et \neg(1\et2)\bigr) \,\et \AG\EX 1 \,\et \AG\EX 2
$$
The next step consists of the encoding of the stripe expressions of $\cT$ by means of an additional labelling 
which is defined on top of the previous one. 
Since the number of atoms/clusters can be exponential in $\len{\varphi}$, we need to encode {\sl one by one} the subformulas of each atom/cluster that occur in each position of a given profile. 
To do this, we denote by $N$ the maximal length of a stripe expression (recall that $N$ is linear in $\len{\varphi}$ 
under the assumption that stripe expressions contain pairwise distinct matched pairs of clusters). For each index 
$1\le i\le N$, each formula $\alpha\in\closure(\varphi)$, and each spatial relation 
$\Carg{d}\in\big\{\CTstrict,\CBstrict,\CULp,\CURp,\CLLp,\CLRp\big\}$, we introduce eight fresh proposition letters: 
$$
\begin{array}{cccc}
  L^{\atom}_i, &\qquad\quad L^{\cluster}_i, &\qquad\quad L^{\Dargsmall{d}\obs}_{i,\alpha}, &\qquad\quad L^{\Dargsmall{d}\req}_{i,\alpha}, \\
  R^{\atom}_i, &\qquad\quad R^{\cluster}_i, &\qquad\quad R^{\Dargsmall{d}\obs}_{i,\alpha}, &\qquad\quad R^{\Dargsmall{d}\req}_{i,\alpha}. 
\end{array}
$$
Intuitively, 
$L^{\atom}_i$ (resp., $L^{\cluster}_i$) holds at a vertex $v$ of $\tilde{\cT}$ if and only if the position 
$i$ of the left profile $E(v)[L]$ of $v$ in $\cT$ contains an atom (resp., a cluster). Similarly, 
$L^{\Dargsmall{d}\obs}_{i,\alpha}$ (resp., $L^{\Dargsmall{d}\req}_{i,\alpha}$) holds at a vertex $v$ of $\tilde{\cT}$ 
if and only if the subformula $\alpha$ belongs to the set of observables $\Darg{d}\obs\bigl(E(v)[L](i)\bigr)$ (resp., to the set of requests $\Darg{d}\req\bigl(E(v)[L](i)\bigr)$). Analogous rules are used to encode the right profiles $E(v)[R]$. 
Note that, since we restrict ourselves to maximal stripe expressions, the above encoding uniquely determines the 
matched pairs of the stripe expressions in $\cT$. 

We now show how to enforce the various sanity conditions on the encoding of $\cT$. 
Conditions C1--C5 of Definition \ref{def:faithful} can be easily encoded by means of a basic 
CTL formula $\AG\lambda_{\text{C1--C5}}$ that holds over the encoding of $\cT$, 
where $\lambda_{\text{C1--C5}}$ is a propositional formula of size polynomial in $\len{\varphi}$. 
Enforcing the matching conditions M1--M3 of Definition
\ref{def:decompositions} requires some additional work. 
For this, it is convenient to explicitly write down the correspondence relationships between the matched pairs of 
a vertex $v$ and the matched pairs of its successors $\prj{1}{v}$ and $\prj{2}{v}$. 
For each triple of indices $i, i_1, i_2$, with $1\le i,i_1,i_2\le N$, we 
introduce a fresh proposition letter $M_{i,i_1,i_2}$ such that
$M_{i,i_1,i_2}$ holds at a vertex $v$ of the encoding of $\cT$ if and only if $E(v)[L](i)\equalin E(\prj{1}{v})[L](i_1)$, 
$E(v)[R](i)\equalin E(\prj{2}{v})[R](i_2)$, and $E(\prj{1}{v})[R](i_1)\equalin E(\prj{2}{v})[L](i_2)$ hold 
over the decomposition tree $\cT$. 
Using a basic CTL formula $\AG\delta$, where $\delta$ contains only positive occurrences of modality $\AX\!$ and no occurrence of other modalities, and it has size polynomial in $\len{\varphi}$, one can check the consistency of proposition letters $M_{i,i_1,i_2}$ at each vertex $v$ with the labellings that define the stripe expressions $E(v)$, $E(\prj{1}{v})$, and $E(\prj{2}{v})$. 
Moreover, enforcing the matching conditions M1--M3 amounts to 
checking the following three simple properties on each vertex $v$ of $\cT$:
\begin{enumerate}[label=\roman*)]   \item for all $1\le i\le\len{E(v)}$, $M_{i,i_1,i_2}$ holds at $v$ for some $1\le i_1\le\len{E(\prj{1}{v})}$ and some $1\le i_2\le\len{E(\prj{2}{v})}$, 
  \item for all $1\le i_1\le\len{E(\prj{1}{v})}$, $M_{i,i_1,i_2}$ holds at $v$ for some $1\le i\le\len{E(v)}$ and some $1\le i_2\le\len{E(\prj{2}{v})}$,
  \item for all $1\le i_2\le\len{E(\prj{2}{v})}$, $M_{i,i_1,i_2}$ holds at $v$ for some $1\le i\le\len{E(v)}$ and some $1\le i_1\le\len{E(\prj{1}{v})}$.
\end{enumerate}
The above properties are clearly expressible by a propositional formula $\lambda_{\text{M1--M3}}$ of small size.
As for the property of global fulfilment (see Definition \ref{def:globallyfulfilled}), we can enforce 
Condition G1 by a simple propositional formula $\lambda_{\text{G1}}$ evaluated at the
root of the tree, and Condition G2 by a conjunction of basic formulas of the form $\AG\AF\delta_{i,\alpha}^{\Cargsmall{d}}$, where $\delta_{i,\alpha}^{\Cargsmall{d}}$ contains only positive occurrences of modality $\AX\!$ and no occurrence of other modalities, $i$ ranges over $\{1,...,N\}$, $\alpha$ ranges over $\closure(\varphi)$, and $\Carg{d}$ ranges over $\big\{\CTstrict,\CBstrict,\CULp,\CURp,\CLLp,\CLRp\big\}$. 
It remains to check the existence of an atom $A$ in $\cT$ such that $\varphi\in A$. 
Without loss of generality, we can assume that the formula $\varphi$ starts with a modality among $\DULp$, $\DURp$, $\DLLp$, and $\DLRp$. This guarantees that $\varphi$ appears at some vertex $v$ of $\cT$ if and only if it appears at its root. Under such an assumption, a simple propositional formula $\lambda_\varphi$ evaluated at the root of the tree can enforce the existence of an atom/cluster of a stripe expression of $\cT$ that contains $\varphi$. Let $\tilde{\varphi}_{\path}$ be the  conjunction of the above-defined basic CTL formulas:
$$
\begin{array}{rcl}
  \tilde{\varphi}_{\path} &=&     \AG\lambda_{\text{C1--C5}} \,\et \AG\delta \,\et \AG\lambda_{\text{M1--M3}}                             
                                \,\et\, \lambda_{\text{G1}} \,\et \!\bigwedge\limits_{i,\alpha,\Cargsmall{d}}\!\!\!\AG\AF\delta_{i,			
                                \alpha}^{\Cargsmall{d}}  \;\,\et\; \lambda_\varphi
\end{array}
$$ 
We can conclude that any formula $\varphi$ of Cone Logic can be translated into a CTL formula $\tilde{\varphi}~=~\tilde{\varphi}_{\tree} \et\tilde{\varphi}_{\path}$, where both $\tilde{\varphi}_{\tree}$ and $\tilde{\varphi}_{\path}$ are conjunctions of basic CTL formulas.
Moreover, $\varphi$ occurs in some globally fulfilled decomposition tree $\cT$, that witnesses $\varphi$ at its root, if and only if $\tilde{\varphi}$ is satisfiable.

\smallskip
In order to complete the proof, we show how to obtain a \pspace decision procedure to check the satisfiability of the CTL formula $\tilde{\varphi}$.
The first conjunct $\tilde{\varphi}_{\tree}$ 
defines a $\{0,1,2\}$-labelled tree, where each vertex 
has at least two successors, distinguished by means of the labels $1$ and $2$. 
We denote such a 
tree by $T$ (up to bisimulation there is only one such structure). 
The second conjunct $\tilde{\varphi}_{\path}$ states that the labelling of $T$ 
can be turned (completed) into a correct encoding $\tilde{T}$ of a globally fulfilled decomposition tree 
$\cT$ that witnesses $\varphi$ (we call $\tilde{T}$ an \emph{expansion} of $T$). 

We observe that $\tilde{\varphi}_{\path}$ contains only {\sl positive} occurrences of modalities $\AG$, $\AF$, and $\AX$. 
Hence, by replacing all occurrences of $\AG$ (resp., $\AF$, $\AX$) in $\tilde{\varphi}_{\path}$  by $\mathbf{G}$  (resp., $\mathbf{F}$, $\mathbf{X}$) and by using standard techniques in automata theory, one can construct a {\sl deterministic} B\"uchi automaton over $\omega$-words $\cA_\path$ equivalent to $\tilde{\varphi}_\path$,
that is, such that $\tilde{\varphi}_\path$ holds over any expansion $\tilde{T}$ if and only if 
$\cA_\path$ accepts all paths $\pi$ of $\tilde{T}$.
$\cA_\path$ can be assumed to be deterministic because modalities $\mathbf{G}$ and $\mathbf{F}$ never occur under a negation and no occurrence of $\mathbf{G}$ is nested in an occurrence of $\mathbf{F}$
in the LTL formula 
corresponding to $\tilde{\varphi}_\path$. 
To avoid any exponential blowup in the construction of 
$\cA_\path$, 
one can use symbolic representations of states and transitions (or, equivalently, linear weak alternation \cite{alternating_automata_and_logics}). 
More precisely, states and transitions of $\cA_\path$
can respectively be represented by tuples of bits, each one corresponding to a subformula 
of $\tilde{\varphi}_\path$ that has to be evaluated,
and by propositional formulas over the bits of the source and target states and 
the input letters. 
Using techniques similar to those in \cite{LTL_to_symbolic_automata}, a symbolic representation of $\cA_\path$ can be computed directly from $\tilde{\varphi}_\path$ in polynomial time.

Now, if we project (the symbolic representation of) the deterministic B\"uchi automaton $\cA_\path$ onto the three proposition letters $0,1,2$, by discarding all other letters from the expansion $\tilde{T}$ of $T$, we obtain (a symbolic representation of) 
a non-deterministic B\"uchi automaton $\cA_\path^\exists$ that accepts all $\omega$-words from $\{0\}\cdot\{1,2\}^\omega$ 
if and only if $\cA_{\path}$ accepts all paths of {\sl some} expansion $\tilde{T}$ of $T$. Finally, the acceptance problem for $\cA_\path^\exists$ can be reduced to the universality problem for (symbolically represented) non-deterministic B\"uchi automata as follows. Let $\overline{\{0\}\cdot\{1,2\}^\omega} (= \{1,2\} \cdot \{0, 1,2\}^\omega)$ be 
the complement of the $\omega$-regular language $\{0\}\cdot\{1,2\}^\omega$. It holds that:
$$
\begin{array}{rl}
  \sL\big(\cA_{\path}^\exists\big) \cup (\{1,2\} \cdot \{0, 1,2\}^\omega) ~=~ \{0,1,2\}^\omega 
  & \; \text{iff} \;\; \sL\big(\cA_{\path}^\exists\big) ~\supseteq~ \{0\}\cdot\{1,2\}^\omega \\[1ex]
  & \; \text{iff} \;\; \ex{\tilde{\cT}} \; \tilde{\cT}\sat\tilde{\varphi}_{\tree} \,\et\, \fa{\pi} \; \tilde{\cT}|_\pi \in \sL\big(\cA_{\path}\big) \\[1ex]
  & \; \text{iff} \;\; \ex{\tilde{\cT}} \; \tilde{\cT}\sat\tilde{\varphi}_{\tree} \,\et\, \tilde{\cT}\sat\tilde{\varphi}_{\path} \\[1ex]
  & \; \text{iff} \;\; \ex{\tilde{\cT}} \; \tilde{\cT}\sat\tilde{\varphi}.
\end{array}
$$
It is not difficult to see that the universality problem for (symbolically represented) non-deterministic B\"uchi automata is in \pspace
(one can use a variant of Savitch's theorem \cite{complexity}). This provides a procedure to decide, in polynomial space, 
whether the Cone Logic formula $\varphi$ appears at the root of some globally fulfilled decomposition tree, and thus, thanks to 
Propositions \ref{prop:completeness} and \ref{prop:soundness}, whether $\varphi$ is satisfied by some labelled rational stripe.
\end{proof}
 \section{Cone Logic and modal logics of time intervals}\label{sec:applications}

In this section, we prove that Cone Logic subsumes an interesting and expressive temporal 
logic based on intervals and relations over them (a subset of the so-called Allen's relations). 
Interval temporal logics of Allen's relations  (the full logic HS and its fragments) have been 
originally introduced by Halpern and Shoham \cite{interval_modal_logic}. 
The basic elements of these logics are the intervals over a fixed, underlying temporal 
domain, e.g., $(\bbQ,<)$. Proposition letters are associated 
with intervals, and existential quantifications are guarded by some of the 12 possible non-trivial 
ordering relations between pairs of intervals \cite{interval_relations}, that is, the ``During'' or 
``sub-interval'' relation $D$, the ``Beginning'' relation $B$, the ``Ending'' relation $E$, the 
``Overlapping'' relation $O$, and so on. 

A number of results about the satisfiability problem for HS fragments have been given in the 
last years that mark the boundary between decidability and undecidability. The rule of thumb 
is that most interval temporal logics are undecidable. 
An up-to-date account of undecidability results for HS fragments can be found in \cite{itl_dark_side}.
Among the known results, we recall 
the undecidability of the logics $D$ (quantifying over sub-intervals)
and $O$ (quantifying over overlapping intervals) -- as well as of their transposes -- interpreted 
over infinite discrete temporal domains \cite{d_undecidable_journal,itl_dark_side}, and 
the undecidability of the logic $BE$ (quantifying over beginning and ending intervals) interpreted 
over both dense and infinite discrete temporal domains \cite{undecidability_BE,D_undec}.

Here we consider the fragment of HS that features the six modalities $\ang{D}$, $\ang{\bar{D}}$, 
$\ang{B}$, $\ang{\bar{B}}$, $\ang{L}$, and $\ang{\bar{L}}$, allowing one to quantify existentially over 
sub-intervals, super-intervals, beginning intervals, begun-by intervals, later intervals, and earlier 
intervals, respectively. 
We present a reduction from the satisfiability problem for $B\bar{B}D\bar{D}L\bar{L}$
to that for Cone Logic, thus proving that the former logic is decidable in polynomial
space when interpreted over the class of dense linear orders.
As a matter of fact, this result partially disproves a conjecture by Lodaya \cite{undecidability_BE}
concerning the undecidability of the satisfiability problem for the fragment $D\bar{D}$
-- strictly speaking, Lodaya did not specify whether the fragment $D\bar{D}$ was interpreted 
over discrete or dense temporal domains. In this respect, it is worth remarking that the
decidability of the HS fragments $B\bar{B}D\bar{D}L\bar{L}$, $D\bar{D}$, and $D$ depends 
on the class of temporal domains where these logics are interpreted.

\smallskip
As a preliminary step, we briefly introduce the syntax and the semantics of the logic $B\bar{B}D\bar{D}L\bar{L}$.
From now on, we assume the underlying temporal domain to be (isomorphic to) the linear ordering 
$(\bbQ,<)$ of the rational numbers and that intervals are non-singleton, closed convex subsets of 
such an ordering, namely, sets of the form $[x,y]=\settc{z\in\bbQ}{x\le z\le y}$, with $x,y\in\bbQ$ 
and $x<y$. We shortly denote by $\bbI$ the set of all intervals over $(\bbQ,<)$. Given $I=[x,y]$ 
and $I'=[x',y']$ in $\bbI$, if $x<x'<y'<y$, then we say that $I'$ is a (strict) \emph{sub-interval} 
of $I$ or, equivalently, that $I$ is a (strict) \emph{super-interval} of $I'$; similarly, if $x'=x$ and $y'<y$, 
then we say that $I'$ \emph{begins} $I$ or, equivalently, that $I$ is \emph{begun by} $I'$; finally, if $x'>y$, 
then we say that $I'$ is \emph{later} than $I$ or, equivalently, that $I$ is \emph{earlier} than $I'$.

Formulas of the logic $B\bar{B}D\bar{D}L\bar{L}$ are built up from proposition letters in a 
signature $\Sigma$ using the standard Boolean connectives and the modalities $\ang{D}$, $\ang{\bar{D}}$, 
$\ang{B}$, $\ang{\bar{B}}$, $\ang{L}$, and $\ang{\bar{L}}$, with the obvious semantics. For instance, given a 
labelled interval structure $\ang{\bbI,(R_a)_{a\in\Sigma}}$, where $R_a\subseteq\bbI$ for all $a\in\Sigma$, 
and given an initial interval $I$, we write $\ang{\bbI,(R_a)_{a\in\Sigma},I}\sat\ang{D}a$ if and only if there is a 
sub-interval $I'$ of $I$ such that $I'\in R_a$.

\smallskip
In the following, we prove that the logic $B\bar{B}D\bar{D}L\bar{L}$ has a decidable satisfiability problem
by translating its formulas into equi-satisfiable formulas of Cone Logic.
Such a translation exploits the existence a natural bijection between the intervals $I=[x,y]$ in $\bbI$ and 
the points $p=(x,y)$ in the rational plane such that $x<y$ (hereafter, we call these points \emph{interval-points}). 

The first step it to show that the region of all interval-points can be somehow described by a formula 
of Cone Logic. Let $\positive$, $\negative$, $\singleton$ be three fresh proposition letters and let $\psi_0$ 
be the following formula of Cone Logic in the signature $\Sigma'=\Sigma\cup\{\positive,\negative,\singleton\}$:
$$
\begin{array}{rcl}
  \psi_0 &~~=&~~  \B(\positive \vel \negative \vel \singleton) \\[1ex]
         &~~\et&~~ \B(\neg\positive \vel \neg\negative) ~\et~ 
                  \B(\neg\positive \vel \neg\singleton) ~\et~ 
                  \B(\neg\negative \vel \neg\singleton)  \\[1ex]
         &~~\et&~~ \B\DTstrict\DBstrict\,\singleton ~\et~ 
                  \B(\singleton \then \BULp\positive \et \BTstrict\positive \et 
                                      \BLRp\negative \et \BBstrict\negative).
\end{array}
$$
Consider now a labelled rational plane $\cP=\ang{\bbP,(R_a)_{a\in\Sigma'}}$ that satisfies $\psi_0$
(see Figure \ref{fig:intervalpoints}). Clearly, the three regions $R_\positive$, $R_\negative$,
and $R_\singleton$ form a partition of the entire plane $\bbP$ (this is enforced by the first two 
lines of $\psi_0$). Moreover, the region $R_\singleton$ has the form of a trajectory $y=f(x)$ that is 
``almost a diagonal'', in the sense that for every $x\in\bbQ$ there is exactly one $y\in\bbQ$ such 
that $(x,y)\in R_\singleton$ and all other points of $R_\singleton$ are contained in the lower-left 
quadrant and in the upper-right quadrant centred at $(x,y)$. In general, the region $R_\singleton$ 
might not coincide with the diagonal $\settc{(x,x)}{x\in\bbQ}$ -- note that if this happens, we would 
immediately have that $R_\positive$ contains all and only the interval-points, that is, the points
$(x,y)\in\bbP$, with $x<y$. Nonetheless, we can prove the following lemma.
\begin{figure}[!!t]
\centering
\includegraphics[scale=0.9]{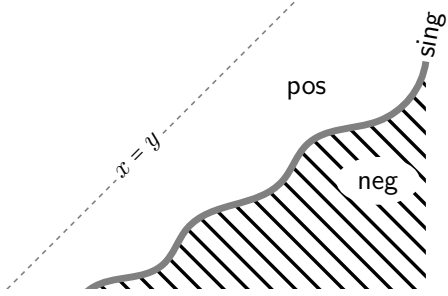}
\caption{A $\positive$-labelled region delimiting (pseudo-)interval-points.}
\label{fig:intervalpoints}
\end{figure}

\begin{lem}\label{lemma:intervalregion}
Given a formula $\varphi$ of Cone Logic, if $\cP=\ang{\bbP,(R_a)_{a\in\Sigma'}}$ is a labelled rational plane
that satisfies $\varphi\et\psi_0$, then there is a labelled rational plane $\cP'=\ang{\bbP,(R'_a)_{a\in\Sigma'}}$ 
that still satisfies $\varphi\et\psi_0$ and such that $R'_\singleton=\settc{(x,x)}{x\in\bbQ}$.
\end{lem}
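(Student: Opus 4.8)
The plan is to read off from $\psi_0$ the exact shape of the three regions, to observe that no change of coordinates can straighten the singleton curve in general, to repair this by a preliminary ``gap‑filling'' step, and only then to straighten.

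First I would analyse $\psi_0$. The conjunct $\B\DTstrict\DBstrict\,\singleton$ puts at least one $\singleton$‑labelled point on every vertical line, while $\B\bigl(\singleton\then\BTstrict\positive\et\BBstrict\negative\bigr)$ together with $\B(\neg\positive\vel\neg\singleton)$ and $\B(\neg\negative\vel\neg\singleton)$ forces it to be unique; hence $R_\singleton$ is the graph of a total function $f\colon\bbQ\to\bbQ$. The conjunct $\B\bigl(\singleton\then\BULp\positive\et\BLRp\negative\bigr)$ then forces $f$ to be strictly increasing and $R_\positive=\settc{(x,y)}{y>f(x)}$, $R_\negative=\settc{(x,y)}{y<f(x)}$. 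Thus $\cP$ is completely determined on the letters $\positive,\negative,\singleton$ by $f$, and the task is to produce a model that is Cone‑Logic‑equivalent to $\cP$ at its distinguished point $p^\ast$ but in which $f$ is the identity.

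The obstruction to doing this by a mere change of coordinates is that any spatial isomorphism $\Psi$ of $\bbP$ must preserve both the vertical and the horizontal foliation --- the vertical one because two distinct points with the same abscissa are exactly those related by $\DTstrict$ or $\DBstrict$, the horizontal one because a point lying due east of $p$ is precisely a point lying both in the $\CURp$‑cone and in the $\CLRp$‑cone of $p$ --- so $\Psi$ has the form $(x,y)\mapsto(\sigma(x),\tau(y))$ for order‑automorphisms $\sigma,\tau$ of $\bbQ$, and such a $\Psi$ maps the graph of $f$ onto the diagonal only if $\tau\circ f=\sigma$, which is impossible when $f$ is not surjective (then $\tau(f(\bbQ))=\sigma(\bbQ)$ would have to be all of $\bbQ$). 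So I would first replace $\cP$ by an equivalent model $\cP_1$ whose singleton function is a bijection between its two axes. For each height $w\in\bbQ\setminus f(\bbQ)$ --- i.e.\ each horizontal line of $\cP$ carrying no $\singleton$‑point --- insert into the abscissa axis a fresh column at the cut $\bigl(\settc{x}{f(x)<w},\settc{x}{f(x)>w}\bigr)$, carrying a $\singleton$‑point at height $w$, the label $\positive$ above it, $\negative$ below it, and, for the letters of $\Sigma$, a vertical ``shuffle'' (in the sense of the proof of Theorem~\ref{prop:soundness}) of the atoms realised, respectively, in $R_\positive$, in $R_\negative$, and on the curve of $\cP$. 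The enlarged abscissa axis is again a countable dense linear order without endpoints, hence isomorphic to $\bbQ$, and the new singleton function is an order‑isomorphism between the two axes of $\cP_1$. Checking $\cP_1\sat\psi_0$ is a routine case analysis using the placement of the new columns together with the cross‑cone conjuncts of $\psi_0$, and checking $\cP_1\sat\varphi$ reduces to exhibiting a bisimulation for the six spatial relations between $\cP$ and $\cP_1$ containing $(p^\ast,p^\ast)$ --- relate each point to itself and each new point to a suitable old point with the same label set.

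To finish, let $g$ be the (order‑isomorphism) singleton function of $\cP_1$; pick any order‑isomorphism $\sigma$ from the abscissa axis of $\cP_1$ onto $\bbQ$ and put $\tau=\sigma\circ g^{-1}$. The map $(x,y)\mapsto(\sigma(x),\tau(y))$ is a spatial isomorphism onto $\bbQ\times\bbQ$ that sends the graph of $g$ to $\settc{(x,x)}{x\in\bbQ}$, so its image $\cP'$ still satisfies $\varphi\et\psi_0$ and has $R'_\singleton$ equal to the diagonal, which is what is wanted. The one genuinely delicate point is the labelling of the inserted columns: it has to be generic enough that every $\DURp$‑, $\DULp$‑, $\DLLp$‑ or $\DLRp$‑request that an old point can now fulfil inside a new column is already fulfilled, at a matching relative height, inside $\cP$ itself; this genericity --- supplied by the density of $\bbQ$ and the shuffle construction --- is exactly what makes the inclusion $\cP\hookrightarrow\cP_1$ into a bisimulation. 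The analysis of $\psi_0$, the verification that $\cP_1\sat\psi_0$, and the concluding change of coordinates are all straightforward.
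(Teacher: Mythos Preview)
Your route is genuinely different from the paper's, and in one respect more careful.  The paper proves a single auxiliary claim: for any strictly increasing $g\colon\bbQ\to\bbQ$ the map $(x,y)\mapsto(x,g(y))$ preserves all six spatial relations and therefore all types, so $\cP$ and $g(\cP)$ have the same shading.  It then sets $\cP'=f^{-1}(\cP)$ (with $f$ the singleton function), observes that $f^{-1}(R_\singleton)=\{(x,x):x\in\bbQ\}$, and declares itself done.  You instead first analyse the possible coordinate changes, argue that a product map $(\sigma,\tau)$ can straighten the graph of $f$ only when $f$ is surjective, and therefore interpose a column-insertion step that forces surjectivity before applying any change of coordinates.

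What this buys you is precisely the point the paper glosses over: $f^{-1}$ is only defined on $f(\bbQ)$, so the paper's ``stretched'' plane $f^{-1}(\cP)$ is really obtained from the restriction $\cP|_{\bbQ\times f(\bbQ)}$, and the auxiliary claim says nothing about how the types in that restriction relate to the types in the full $\cP$; your diagnosis of this obstruction is correct, and $\psi_0$ does allow non-surjective $f$.  On the other hand, your own repair carries most of the weight and is left as a sketch.  Saying that the new columns carry ``a shuffle of the atoms realised in $R_\positive$, in $R_\negative$, and on the curve'' is not yet a definition: the label of a new point $(c_w,y)$ must be consistent with what is visible from that particular height in all six cones, and those views change with $y$, so a single global shuffle will not suffice.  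Correspondingly, the claimed bisimulation between $\cP$ and $\cP_1$ needs the back-and-forth clauses spelled out (in particular, for a step from an old point into a new column you must produce an old surrogate in the correct cone with the same atomic labels), and this is exactly where the choice of labelling matters.  The overall strategy is sound, but these details are not routine and should be written out.
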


As a preliminary step, we prove that we can ``stretch'' the $y$-coordinates of a labelled plane $\cP$ with respect to any 
strictly increasing function $f$, without affecting the satisfaction of any of the subformulas:

\begin{clm}
For every strictly increasing function $f:\bbQ\then\bbQ$, the labelled plane $\cP=\ang{\bbP,(R_a)_{a\in\Sigma'}}$ 
has the same shading as the labelled rational plane $f(\cP)=\bigang{f(\bbP),\big(f(R_a)\big)_{a\in\Sigma'}}$, where 
$f(R)=\bigsettc{(x,f(y))}{(x,y)\in R}$ for all $R\subseteq\bbP$.
\end{clm}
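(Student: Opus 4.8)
The plan is to exhibit $f(\cP)$ as an isomorphic copy of $\cP$ and then transport satisfaction by a routine structural induction. First I would introduce the lifted map $\hat{f}:\bbP\then f(\bbP)$ given by $\hat{f}(x,y)=(x,f(y))$. Since $f$ is strictly increasing, $\hat{f}$ is a bijection of $\bbP$ onto $f(\bbP)$ that preserves and reflects the order and equality relations on both coordinates (trivially on the $x$-coordinate, and on the $y$-coordinate because a strictly increasing $f$ is a strict order embedding, so that $y\bowtie y'$ iff $f(y)\bowtie f(y')$ for every $\bowtie\in\{<,=,>\}$).

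The crucial observation — essentially the whole content of the claim — is that each of the six spatial relations $\CTstrict$, $\CBstrict$, $\CULp$, $\CURp$, $\CLLp$, $\CLRp$ is, by its very definition, a Boolean combination of the atomic conditions $x<x'$, $x=x'$, $y<y'$, $y=y'$ on the coordinates of $p=(x,y)$ and $q=(x',y')$; for instance, $p\CURp q$ holds iff $x<x'$ and $y\le y'$. Hence $\hat{f}$ preserves and reflects every spatial relation: $p\Carg{d}q$ iff $\hat{f}(p)\Carg{d}\hat{f}(q)$ for all $p,q\in\bbP$ and all $\Carg{d}\in\big\{\CTstrict,\CBstrict,\CULp,\CURp,\CLLp,\CLRp\big\}$. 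Together with the fact that, by the definition $f(R_a)=\bigsettc{(x,f(y))}{(x,y)\in R_a}$, we have $p\in R_a$ iff $\hat{f}(p)\in f(R_a)$ for every $a\in\Sigma'$, this makes $\hat{f}$ an isomorphism between the labelled structures $\cP$ and $f(\cP)$.

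Next I would prove, by induction on the structure of a Cone Logic formula $\alpha$, that $\ang{\cP,p}\sat\alpha$ iff $\ang{f(\cP),\hat{f}(p)}\sat\alpha$ for every point $p\in\bbP$: the cases of proposition letters and Boolean connectives are immediate from the two preservation properties above, and the modal case $\Darg{d}\alpha'$ follows because $\hat{f}$ is a bijection of $\bbP$ onto $f(\bbP)$ that preserves and reflects $\Carg{d}$. In particular, this gives $\type_{\cP}(p)=\type_{f(\cP)}\big(\hat{f}(p)\big)$ for all $p\in\bbP$, and since $\hat{f}$ maps $\bbP$ onto $f(\bbP)$, the two structures realize exactly the same set of types, i.e.\ they have the same shading. (As a side remark, $\big(f(\bbQ),<\big)$ is a countable dense linear order — density is inherited from $\bbQ$ through $f$ — with neither a least nor a greatest element, hence isomorphic to $(\bbQ,<)$; so $f(\bbP)=\bbQ\times f(\bbQ)$ is isomorphic to $\bbP$ and $f(\cP)$ is indeed, up to isomorphism, a labelled rational plane.)

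I do not expect a real obstacle here: the claim is entirely captured by the remark that the spatial relations see only the order type of the coordinates, so that a strict order embedding of the $y$-axis lifts to an embedding of the whole planar structure. The one point that deserves to be stated explicitly is that $\hat{f}$ must be surjective onto $f(\bbP)$ — the domain of $f(\cP)$ — in order for the existential modalities of $f(\cP)$ to be transported soundly; this is precisely where one uses that $\hat{f}$ is a bijection onto $f(\bbP)$ rather than merely an embedding into $\bbP$.
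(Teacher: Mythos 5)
Your proof is correct and follows essentially the same route as the paper: both argue that the lifted map $(x,y)\mapsto(x,f(y))$ preserves labels and preserves/reflects all six spatial relations (and that $f(\bbQ)$ is a countable dense order without endpoints, so $f(\cP)$ is indeed a labelled rational plane), and then conclude that types, hence shadings, are preserved. Your explicit structural induction merely spells out the step the paper compresses into its appeal to the view-to-type correspondence, so there is nothing missing or divergent.
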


\begin{proof}[Proof of the claim]
To start with, we observe that $f(\bbQ)$, equipped with the natural ordering of the rational numbers, 
is a countable dense linear order with no minimum nor maximum elements. Hence $f(\cP)$ can be given the 
status of labelled rational plane. 

To conclude the proof, it suffices to observe that for all points $(x,y),(x',y')\in\bbP$ and all spatial relations $\Carg{d}$, 
$$
  (x,y) ~\Carg{d}~ (x',y') \qquad\text{iff}\qquad (x,f(y)) ~\Carg{d}~ (x',f(y')).
$$
Using to the view-to-type dependency, we derive $\type_{\cP}(x,y)=\type_{f(\cP)}(x,f(y))$
for all points $(x,y)\in\bbP$, which shows that $\type_\cP(\bbP)=\type_{f(\cP)}(f(\bbP))$.
\end{proof}

\begin{proof}[Proof of Lemma \ref{lemma:intervalregion}] 
Let $\cP=\ang{\bbP,(R_a)_{a\in\Sigma'}}$ be a model for the formula $\varphi\et\psi_0$. 
By the definition of $\psi_0$, 
there is a function $f:\bbQ\then\bbQ$ such that for all $x\in\bbQ$, $(x,y)\in R_\singleton$ if and only
if $y=f(x)$ (this is enforced by the third line of the definition of $\psi_0$). 
By the definition of $\psi_0$, it holds that 
$f$ is strictly increasing: if $x<x'$, then $(x,f(x))$ and $(x',f(x'))$ are two points in $R_\singleton$ such that
$(x,f(x)) ~\CUR~ (x',f(x'))$, and thus $f(x)<f(x')$.

Now, let us denote by $f^{-1}$ the inverse of the function $f$, which is also strictly increasing. By the previous 
claim, 
we know that the ``stretched'' labelled plane $\cP'=f^{-1}(\cP)$, which is obtained by mapping 
each point $(x,y)$ of $\cP$ to the point $(x,f^{-1}(y))$, has the same shading as $\cP$, and hence it 
also satisfies the formula $\varphi\et\psi_0$. 
Moreover, by construction, the region of all $\singleton$-labelled points in $\cP'$ coincides with the 
diagonal of $\cP'$:
$$
  f^{-1}(R_\singleton) ~=~ \bigsettc{(x,f^{-1}(y))}{(x,y)\in R_\singleton} ~=~ \bigsettc{(x,f^{-1}(f(x))}{x\in\bbQ}.
$$
This shows that the $\positive$-labelled points of $\cP'$ are exactly the interval-points.
\end{proof}

\medskip
Making use of Lemma \ref{lemma:intervalregion}, we can translate any 
formula $\varphi$ of the logic $B\bar{B}D\bar{D}L\bar{L}$ into an equi-satisfiable formula 
$\tilde{\varphi}$ of Cone Logic,
which is obtained by first replacing each occurrence of a subformula $\ang{D}\alpha$ (resp., 
$\ang{\bar{D}}\alpha$, $\ang{B}\alpha$, $\ang{\bar{B}}\alpha$, $\ang{L}\alpha$, $\ang{\bar{L}}
\alpha$) in $\varphi$  by the formula $\DLR(\positive\et\alpha)$ (resp., $\DUL(\positive\et\alpha)$, $\DBstrict(\positive\et\alpha)$, $\DTstrict(\positive\et\alpha)$, $\BLRp(\positive\then\DUR(\positive\et\alpha))$, $\BLRp(\positive\then\DLL(\positive\et\alpha))$) and then adding the conjunct $\psi_0$. 

We can easily check the correctness of the translation for modalities
$\ang{D}$, $\ang{\bar{D}}$, $\ang{B}$, and $\ang{\bar{B}}$. 
Proving that the translation of modalities $\ang{L}$ and $\ang{\bar{L}}$ is correct as well
is less straightforward.
Let us consider an interval $I=[x,y]$ and a later interval $I'=[x',y']$ of $I$, with $x'>y$. 
Figure \ref{fig:future} depicts the spatial relationships between the corresponding interval-points 
$p=(x,y)$ and $p'=(x',y')$ and the intermediate point $q=(y,y)$. Clearly, for every interval-point 
$q'$ such that $p \,\CLRp\, q'$, we have $q' \,\CURp\, q$ 
and $q \,\CUR\, p'$, 
and hence $q' \,\CUR\, p'$. Conversely, if $p=(x,y)$ and $p'=(x',y')$ are two 
interval-points such that $p \,\CLRp\, q'$ implies $q' \,\CUR\, p'$ for all interval-points $q'$, 
then we necessarily have $x'>y$ and hence $I'=[x',y']$ is a later interval of $I=[x,y]$. This shows 
that the translation that replaces each occurrence of a subformula $\ang{L}\alpha$ by the formula
$\BLRp(\positive\then\DUR(\positive\et\alpha))$ is correct. Similar arguments can be used to prove the 
correctness of the translation for $\ang{\bar{L}}$.
\begin{figure}[!!t]
\centering
\includegraphics[scale=0.9]{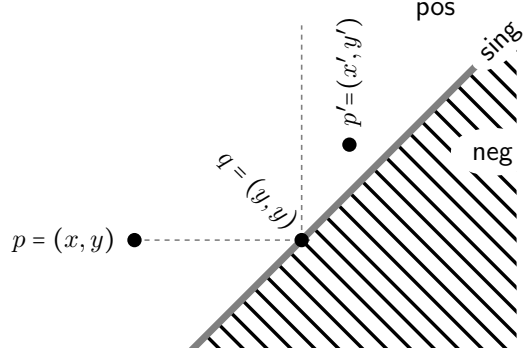}
\caption{Spatial relationship between an interval-point and its future.}
\label{fig:future}
\end{figure}

Now, the translation of $B\bar{B}D\bar{D}L\bar{L}$ formulas into equi-satisfiable Cone Logic formulas, 
together with the decidability result given in Section \ref{sec:solution}, yields a \pspace procedure 
to decide the satisfiability problem for the logic $B\bar{B}D\bar{D}L\bar{L}$ when the underlying 
domain is assumed to be dense. This subsumes previous results from \cite{subinterval_tableau_journal}. 
Moreover, we know from \cite{chronological_future_modality} that the satisfiability problem for the 
interval logic that only uses the subinterval operator $\ang{D}$ is already \pspace-hard over
dense domains. We can thus sum up with the following theorem.

\begin{thm}\label{theo:pspace}
The satisfiability problem of Cone Logic over the rational plane and that of the 
interval temporal logic $B\bar{B}D\bar{D}L\bar{L}$ over $\bbQ$ are \pspace-complete.
\end{thm}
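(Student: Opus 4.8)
The plan is to assemble the pieces already in place, adding only one external complexity fact. For the upper bounds: Theorem~\ref{th:pspace} already puts satisfiability of Cone Logic over the rational plane (in fact over rational or real planes) in \pspace. For the interval logic I would use the translation $\varphi\mapsto\tilde\varphi$ described just above this statement — replace every occurrence of $\ang{D}\alpha,\ang{\bar D}\alpha,\ang{B}\alpha,\ang{\bar B}\alpha,\ang{L}\alpha,\ang{\bar L}\alpha$ by the stated Cone Logic formula over $\Sigma'=\Sigma\cup\{\positive,\negative,\singleton\}$ and conjoin $\psi_0$. This map is computable in polynomial time, and by Lemma~\ref{lemma:intervalregion} together with the per-modality correctness arguments, $\tilde\varphi$ is satisfiable over the rational plane if and only if $\varphi$ is satisfiable over the interval structure of $(\bbQ,<)$. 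Feeding $\tilde\varphi$ into the procedure of Theorem~\ref{th:pspace} then gives a \pspace procedure for $B\bar B D\bar D L\bar L$ over $\bbQ$.

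For the matching lower bounds I would invoke the known \pspace-hardness of the satisfiability problem for the HS fragment with only the subinterval modality $\ang{D}$, interpreted over dense linear orders~\cite{chronological_future_modality}. Since $\ang{D}$ is a syntactic fragment of $B\bar B D\bar D L\bar L$, this yields \pspace-hardness of $B\bar B D\bar D L\bar L$ over $\bbQ$ immediately. Composing the same \pspace-hard problem with the polynomial-time translation $\varphi\mapsto\tilde\varphi$ — a $\ang{D}$-formula $\varphi$ is satisfiable over $\bbQ$ if and only if $\tilde\varphi$ is satisfiable over the rational plane — then shows that Cone Logic satisfiability over the rational plane is \pspace-hard as well. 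Combined with the upper bounds, both problems are \pspace-complete.

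The only step needing real attention is the correctness of the translation, above all for $\ang{L}$ and $\ang{\bar L}$: the equivalence between ``$I'=[x',y']$ is later than $I=[x,y]$'' and the Cone Logic condition $\BLRp(\positive\then\DUR(\positive\et\alpha))$ goes through the intermediate interval-point $q=(y,y)$ (Figure~\ref{fig:future}), and it relies on Lemma~\ref{lemma:intervalregion} to ensure the $\positive$-region is exactly the set of interval-points $(x,y)$ with $x<y$. All of these, as well as Theorem~\ref{th:pspace}, are already established, so the theorem follows by bookkeeping; I do not anticipate a new technical obstacle.
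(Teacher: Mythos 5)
Your proposal matches the paper's own argument: the \pspace{} upper bound for both problems comes from Theorem~\ref{th:pspace} via the polynomial-time translation $\varphi\mapsto\tilde\varphi$ (justified by Lemma~\ref{lemma:intervalregion} and the per-modality correctness checks, notably for $\ang{L}$ and $\ang{\bar{L}}$), and the matching lower bounds come from the \pspace-hardness of the $\ang{D}$-only fragment over dense orders~\cite{chronological_future_modality}, lifted to Cone Logic through the same translation. This is essentially identical to the paper's reasoning, so no further comment is needed.
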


Thanks to the above theorem and Remark \ref{rem:rationalplanes}, we know that the satisfiability 
problem for Cone Logic interpreted over the class of all labelled planes (including the rational and the real labelled planes) and that of the interval temporal logic $B\bar{B}D\bar{D}L\bar{L}$ interpreted over the class of all 
dense temporal domains are \pspace-complete.
Finally, we point out that similar decidability results hold for the logic $E\bar{E}D\bar{D}L\bar{L}$, 
by simply changing the orientation of the $x$- and $y$-axes.

 \section{Conclusions}\label{sec:conclusions}

In this paper, we investigated the satisfiability problem for a suitable weakening of 
Venema's Compass Logic, called Cone Logic, and we proved that, unlike the cases of
Compass Logic and other projection-based spatial logics, it is decidable (\pspace-complete)
over the rational plane $\bbQ\times\bbQ$. 
Moreover, we showed that such a decidability result can be exploited to prove the
decidability of the interval temporal logic $B\bar{B}D\bar{D}L\bar{L}$ of Allen's 
relations `Begins', `During', and `Later', and their transposes, over the class of 
dense linear orders (equivalently, the rational numbers), thus disproving a
conjecture by Lodaya \cite{undecidability_BE}.

One may consider possible extensions of Cone Logic in various directions. 
For instance, one may consider multi-dimensional spaces and introduce a corresponding 
logic to describe spatial relationships over points in these spaces (in general, $2^n$ 
distinct cone-shaped directions exist in a space with $n$ dimensions). Alternatively, 
one may partition the two-dimensional space into more than four cone-shaped directions. 
In all such cases, we believe it possible to generalize the achieved results, e.g., the tree 
pseudo-model property, in a rather natural way (the complexity of the satisfiability 
problem, however, may increase significantly). 
Other generalizations of Cone Logic envisage the use of region-based relationships. 
As an example, the correspondence between intervals over the rational line and points 
over the rational plane can be lifted to higher-dimensional objects, proving, for instance, 
that a suitable spatial logic based on rectangular regions, that is, 
two-dimensional intervals, 
is subsumed by a 
four-dimensional point-based modal logic very similar to Cone Logic. This 
would establish a first interesting bridge between Cone Logic and relativistic temporal logics 
based on Minkowski's space-time structure \cite{chronological_future_modality}.

The most interesting open problem is that of determining whether or not
Cone Logic remains decidable when interpreted over the real plane $\bbR\times\bbR$.
In Remark \ref{rem:rationalplanes}, we have seen that, if a Cone Logic formula holds over 
$\bbR\times\bbR$, then it also holds over $\bbQ\times\bbQ$. The converse does not hold 
in general, as there exist formulas of Cone Logic, e.g., that of Example \ref{ex:cantor}, that 
hold over $\bbQ\times\bbQ$, but not over $\bbR\times\bbR$. 
The satisfiability problem for Cone Logic over $\bbR\times\bbR$ is not known to be decidable, 
and the same applies to the interval temporal logic $B\bar{B}D\bar{D}L\bar{L}$ interpreted 
over $\bbR$. We plan to study these decidability problems in the future.

\smallskip\noindent
{\bf Acknowledgments.}\
The authors would like to thank the anonymous reviewers and Dario Della Monica for their 
valuable comments on a preliminary draft of this paper. 
\bibliographystyle{plain}
\bibliography{main}

\end{document}